\title{Submodular Stochastic Probing with Prices}
\newcommand{\greedyProb}{\textsf{GreedyProbing}\xspace}
\newcommand{\price}{\Delta}
\newcommand{\z}{\vb{z}}
\renewcommand{\v}{\vb{v}}
\newcommand{\wC}{C}
\newcommand{\cost}{\textsf{cost}}
\newcommand{\p}{\vb{p}}
\newcommand{\e}{\vb{e}}
\newcommand{\eps}{\epsilon}
\newcommand{\R}{\mathbb{R}}
\newcommand{\alg}{\textsf{Alg}}
\newcommand{\opt}{\textsf{Opt}}
\newcommand{\U}{\mathcal{N}} 
\newcommand{\N}{\U}          
\newcommand{\I}{\mathcal{I}}
\newcommand{\Iin}{\I_{\textnormal{in}}}
\newcommand{\Iout}{\I_{\textnormal{out}}}
\newcommand{\bpi}{\boldsymbol{\price}}
\renewcommand{\P}{\mathcal{P}}
\newcommand{\ind}{\textbf{1}}
\newcommand{\supp}{\textnormal{supp}}
\newcommand{\x}{\vb{x}}
\newcommand{\y}{\vb{y}}
\renewcommand{\i}{\textnormal{in}}
\renewcommand{\o}{\textnormal{out}}
\newcommand{\spp}{\textsf{SPP}}
\renewcommand{\sp}{\textsf{SP}}
\newcommand{\piout}{\pi^{\o}}
\newcommand{\piin}{\pi^{\i}}
\newcommand{\w}{\vb{w}}
\newcommand{\act}{\mathcal{A}}
\newcommand{\EE}{\mathcal{E}}
\newcommand{\joint}{J}
\DeclareMathOperator{\Prob}{\mathbb{P}}
\renewcommand{\Pr}{\Prob}
\newtheorem{lemma}{Lemma}
\newtheorem{theorem}{Theorem}
\newtheorem{corollary}{Corollary}
\theoremstyle{definition}
\newtheorem{definition}{Definition}
\theoremstyle{definition}
\newtheorem{observation}{Observation}
\DeclareMathOperator{\argmax}{\textnormal{argmax}}
\DeclareMathOperator{\E}{\textnormal{\bf E}}
\title{Submodular Stochastic Probing with Prices}
\author{Ben Chugg\footnote{benjamin.chugg@maths.ox.ac.uk. Supported by a research internship at RIKEN.}\qquad Takanori Maehara\footnote{takanori.maehara@riken.jp}}
\date{}
\begin{document}

\maketitle

\begin{abstract}
We introduce \emph{Stochastic Probing with Prices (SPP)}, a variant of the \emph{Stochastic Probing (SP)} model in which we must pay a price to probe an element.
A SPP problem involves two set systems $(\N,\Iin)$ and $(\N,\Iout)$ where each $e\in\N$ is \emph{active} with probability $p_e$.
To discover whether an element $e$ is active, it must be \emph{probed} by paying the \emph{price} $\price_e$.
If an element is probed and is active, then it is irrevocably added to the solution. 
Moreover, at all times, the set of probed elements must lie in $\Iout$, and the solution (the set of probed and active elements) must lie in $\Iin$.
The goal is to maximize a submodular set function $f$ minus the cost of the probes. 
We give a bi-criteria approximation algorithm to the online version of this problem, in which the elements are shown to the algorithm in a possibly adversarial order. 
Our results translate to state-of-the-art approximations for the traditional (online) stochastic probing problem. 
\end{abstract}

\section{Introduction}

Life is hard: There is lots of uncertainty, it is rarely the case that one can gather all the information before making a decision, and information isn't free. 
Whether we're buying a plane ticket for a conference, or a pair of socks for a significant other\footnote{We leave the question of whether a pair of socks is the optimal gift for another paper.}, it is impossible to be aware of every item in the market and gathering the information to make a good decision costs time and resources. 
The job of computer scientists is to try and model such scenarios.
In this paper, we propose the \emph{Stochastic Probing with Prices (SPP)} model, which combines two previous models---Stochastic Probing~\cite{gupta2013} and the Price of Information~\cite{singla2018}---in order to study decision making under uncertainty where ascertaining new information has a cost.

\subsection{Problem Formulation}

Let $\U=\{e_1,\dots,e_n\}$ be a set of $n$ elements, and $(\N,\Iin)$, $(\N,\Iout)$ be two downward-closed set systems\footnote{$\I \subset 2^\N$ is \emph{downward-closed} if $A\in\I$ and $B\subset A$ implies $B\in \I$.}. Every element $e\in\U$ is \emph{active} with probability $p_e$, independently of all other elements.  
In order to determine whether $e$ is active, we can \emph{probe} $e$. However, if $e$ is probed and is active, then it must be added to our current solution (which thus consists of all previously probed elements which are also active). Moreover, at all times the set of probed elements must lie in $\Iout$ and the solution must lie in $\Iin$. 
Thus, if $P \subset \N$ is the set probed elements so far and $S\subset P$ is the solution, we can only query an element $e$ that satisfies $P\cup\{e\}\in\Iout$ and $S\cup\{e\}\in\Iin$. 
The set of active elements is determined a priori, i.e., before the execution of the algorithm. Given an instance of the problem, let $\act$ refer to the set of active elements. 
We suppose we are given two set-functions, $f,\cost:2^\U\to\R$, where $f$ is a \emph{utility} or \emph{objective} function, and $\cost$ is (naturally) a \emph{cost} or \emph{payment} function.  
The goal is to query the elements in such a way as to maximize $f$ while minimizing the cost paid for the queries, i.e., to maximize 
$\E_{\act}[f(P\cap\act)-\cost(P)]$. 
In this paper we assume that the cost function is linear: for each $e\in\N$ there exists a price $\price_e\in\R$ drawn from some distribution $F_e$, and $\cost(P)=\sum_{e\in P}\price_e$. 
For simplicity, we will assume that $\price_e$ is deterministic (i.e., $F_e$ is a point mass).
Otherwise, since $\cost$ is linear, we can simply replace $\price_e$ with $\E[\price_e]$ to obtain our results for more general distributions. We call this model \emph{Stochastic Probing with Prices (SPP)}. 
If we can choose the order in which the data is revealed to us is, the problem is \emph{offline}, otherwise it is \emph{online}. 
We draw a distinction between two kinds of online settings: an \emph{adversarial setting}, in which the order may be chosen by an adversary, and a \emph{random-order setting}, in which the elements are chosen uniformly at random.  In this paper, all of our results pertaining to online SPP apply in the adversarial online setting. 
Surprisingly, when $f$ is submodular, our algorithms do not seem to yield better approximations in the offline or  random-order settings than in the adversarial setting.

SPP is a generalization of the \emph{Stochastic Probing (SP)} model of Gupta and Nagarajan~\cite{gupta2013} and borrows the pricing idea from the recently developed \emph{Price of Information (PoI)} model of Singla~\cite{singla2018}---see Sections \ref{sec:intro_sp} and \ref{sec:intro_poi}.
It falls under the general category of what we might call a \emph{query-and-commit} model of computation, in which irrevocable decisions must be made on the fly. 
Stochastic matching, packing and the secretary problem are examples of problems studied under such a model \cite{feldman2009online,karp1990optimal,dean2005adaptivity,chow1964optimal}.
Since SPP is a generalization of SP (as we will see), it is useful in the same application domains as the latter model.
However, we believe few problems encountered in practice do not have a cost associated with obtaining information (even if it is only a time cost).  

We now formally introduce the models upon which SPP is built, and relevant known results. 

\subsection{Stochastic Probing}
\label{sec:intro_sp}
The SP model is precisely the SPP model without the cost function. That is, the goal is to maximize $f$ only. It has been a valuable abstraction of stochastic matching and packing, and has been useful in modeling such practical problems as kidney exchange, online dating, and posted price mechanisms \cite{gupta2013,adamczyk2016,adamczyk2015}. 

In order the present many of the known results for SP, we need to introduce \emph{contention resolution schemes (CRSs)}, first introduced by Chekuri et al.~\cite{chekuri2011}. Suppose we are given a fractional solution $\x$ to an LP-relaxation of some constrained maximization problem. The goal is now to round $\x$ to an integral solution $\x_I$ which respects the constraints, without losing too much of the value given by the fractional solution. Intuitively, a $c$-CRS gives a guarantee that if $x_e>0$ then $x_e$ will be in the solution with probability at least $cx_e$. Thus, if $L$ is the linear (or concave) objective function, then $L(\x_I)\geq cL(\x)$, with high probability (w.h.p.).  A $(b,c)$-CRS gives the same guarantee assuming that $x\in b\P$, where $\P$ is a polytope relaxation of the constraints of the problem. (Thus, a $(1,c)$-CRS is a $c$-CRS.) Feldman et al.~\cite{feldman2016online} recently extended the notion of CRSs to online settings. We will provide the formal definition of both offline and online CRSs in Section \ref{sec:prelims_crschemes}.

As mentioned above, the general framework of offline SP problems was first introduced by \cite{gupta2013} who restrict their attention to modular (linear) objective functions. When $\Iin$ and $\Iout$ admit offline $(b,c_\i)$ and $(b,c_\o)$ schemes, they give a $b(c_\o+c_\i-1)$-approximation to modular SP. This implies a $\Omega(\frac{1}{(k+\ell)^2})$-approximation when $\Iin$ and $\Iout$ are $k$ and $\ell$ systems \cite{gupta2013}. 
If $\Iin$ and $\Iout$ are the intersection of $k$ and $\ell$ matroids, respectively, then they give a $\frac{1}{4(k+\ell)}$-approximation. Under these same constraints, Adamczyk et al.~\cite{adamczyk2016} improve this approximation to $\frac{1}{k+\ell}$ via a randomized rounding technique, and give a $\frac{1-1/e}{k+\ell+1}$-approximation in the more general case when $f$ is monotone submodular.
Online SP was first introduced and studied by Feldman et al. \cite{feldman2016online}. When $\Iin$ and $\Iout$ admit $(b,c_\i)$ and $(b,c_\o)$ online CRSs respectively, they give a $b c_\o c_\i$ approximation to online modular SP, and a $(1-e^{-b})c_\i c_\o$-approximation to online, submodular SP for monotone objective functions. These results hold in the adversarial setting. 
More recently, Adamczyk and W\l{}odarcyzk~\cite{adamczyk2018random} initiated the study of random order CRSs, and use such schemes to achieve the first approximations to non-monotone submodular SP in the random-order setting, giving a $\frac{1}{(k+\ell+1)e}$-approximation.

\subsection{Price of Information}
\label{sec:intro_poi}

Perhaps the first, well-known problem to incorporate costs for querying elements is Weitzman's Pandora's Box \cite{weitzman1979optimal}. 
Generalizing this problem,  Singla~\cite{singla2018} recently developed 
the PoI model.

In this setting, we are given a single set system\footnote{Although it is possible to generalize the model, as discussed in \cite{singla2018}.} $(\U,\I)$ and each element $e$ has a value $v_e$ which is drawn from some distribution $D_e$. The distribution is not limited to a Bernoulli distribution as in the SP model. Each element $e$ also a \emph{price}, $\price_e$. 
The goal is to query a set of elements $P$ in order to maximize $\E_{\{D_e\}}[\max_{I\subset P,I\in\I}\{\sum_{e\in I}v_e + h(I)\} - \sum_{e\in P}\price_e]$,
where $h$ is some function that depends only on the chosen set, not the values of the elements.
The function $\sum_{e\in I}v_e+h(I)$ is called a \emph{semiadditive} function for this reason. 
Besides the fact that one must pay for making queries in this model, the key distinction is that one gets to choose the maximizing argument $I\subset P$. That is, this model is not a query-and-commit model. 
The major result of Singla~\cite{singla2018} was to prove that if there is an $\alpha$-approximation to a given problem using what he calls a ``frugal'' algorithm (essentially, but slightly more general than, a greedy algorithm) when all prices are zero, then there is an    $\alpha$-approximation to the problem with arbitrary prices. Intriguingly, in our work we will obtain a similar result for modular SPP---this will be described in detail later. 

While SPP is a generalization of SP, there is no strict inclusion relation between SPP and the PoI model. 
The PoI model is not a query-and-commit model and as such it does not capture SPP.
Furthermore, it is unclear how one would expand a PoI problem to use more general classes of objective functions---submodular functions, for example---while this is easily modeled in SPP.
Conversely, SPP does not capture PoI because it is more limited in its assumption that an element is either active or inactive, and cannot model more complicated distributions. 

In his Ph.D. thesis, Singla does examine the PoI setting with commitment constraints~\cite{singlaPHD}.
In this scenario, the values of the elements change in time as a Markov process, and at each step, we can decide to advance the Markov chain and pay a cost, or stop it and obtain its current value. 
The setting of activation probabilities can be viewed as a two-stage Markov chain; therefore this model can be viewed as a generalization in that sense.
However, there is no outer-constraint, i.e., every element can be queried.

\subsection{Our Results}
\label{sec:results}

In this paper, we are concerned with two classes of objective functions: modular and submodular. We call the corresponding problems modular and submodular SPP. 
Note that if $f$ is modular then there exists weights $(w_e)_{e\in\N}$ such that $f(E)=\sum_{e\in E}w_e$. Consequently, when all prices are zero, this problem has also been called weighted stochastic probing. 

We begin by studying modular SPP. We demonstrate that if there exists an $\alpha$-approximation for SP, then there exists an $\alpha$-approximation for SPP.
This result applies in all SPP settings (e.g., adversarial, random order, offline). 
\begin{theorem}
\label{thm:sp=spp}
If there exists an $\alpha$-approximation to modular SP then there exists an $\alpha$-approximation to modular SPP. 
\end{theorem}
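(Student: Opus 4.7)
The plan is to reduce modular SPP to modular SP by absorbing each element's price into its weight. Given an SPP instance on $(\N,\Iin)$, $(\N,\Iout)$ with weights $(w_e)_{e\in\N}$, prices $(\price_e)$, and activation probabilities $(p_e)$, I construct a modular SP instance on the same set systems with the same activation probabilities and modified weights $w'_e = w_e - \price_e/p_e$ (assuming $p_e > 0$, else $e$ is irrelevant). Any element with $w'_e \le 0$ is discarded: its expected net contribution when probed is non-positive, so neither $\optssp$ nor the analogous $\optsp$ value changes if we forbid probing such elements.

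The crux is that the SPP and SP objectives coincide on the transformed instance. Consider any feasible (possibly adaptive) probing strategy and let $P$ denote the induced random probe set. Because activations are independent across elements and the decision to probe $e$ is made before observing $e$'s own activation, $\ind[e\in P]$ is independent of $\ind[e\in\act]$, so $\Pr[e\in P\cap\act] = p_e\Pr[e\in P]$. Summing over $e$,
\[
\E\Big[\sum_{e\in P\cap\act} w_e - \sum_{e\in P}\price_e\Big] = \sum_{e\in\N}(p_e w_e - \price_e)\Pr[e\in P] = \sum_{e\in\N} p_e w'_e \Pr[e\in P] = \E\Big[\sum_{e\in P\cap\act} w'_e\Big].
\]
Because both problems share the constraints $(\Iin,\Iout)$ and the activation distribution, their feasible strategy spaces are identical, and the displayed equation then forces $\optssp$ to equal the optimum $\optsp$ of the transformed instance.

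Running the assumed $\alpha$-approximation for modular SP on this transformed instance yields a strategy of expected SP value at least $\alpha\cdot\optsp$; by the equality above, the same strategy achieves expected SPP value at least $\alpha\cdot\optssp$ on the original instance. The reduction transforms each element using only its own data, independent of the others, so it is preserved under any order of arrival, and hence transports the SP approximation to online (adversarial or random-order) and offline SPP alike.

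The main subtlety worth verifying carefully is the independence claim for adaptive strategies. Formally, the policy's decision on $e$ depends only on the activations of elements probed strictly before $e$, and these are independent of $\act_e$; no circularity arises because the policy never conditions on $e$'s own activeness when deciding to probe $e$. Beyond this measure-theoretic bookkeeping, the remaining proof is a direct accounting identity.
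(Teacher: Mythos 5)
Your reduction is essentially the paper's own proof: the same change of variables $w'_e=w_e-\price_e/p_e$ (the paper's $z_e$), the same per-strategy value-preserving identity $\E[\mathrm{SPP\ value}]=\E[\mathrm{SP\ value}]$ resting on $\Pr[e\in P\cap\act]=p_e\Pr[e\in P]$ (the paper's Lemma~\ref{lem:sp=spp}), the same pruning of elements with $p_ew_e\le\price_e$ (Lemma~\ref{lem:noDebt}), and the same transfer of the $\alpha$-guarantee by applying the identity to both the algorithm and the optimum; your explicit independence argument for adaptive strategies is, if anything, spelled out more carefully than in the paper. The only divergence is the case $p_e=0$: since the paper allows negative prices in this section, such an element need not be ``irrelevant'' (probing it nets $-\price_e>0$), and the paper handles it by giving it SP weight $z_e=-\price_e$ and activation probability $\hat{p}_e=1$; under nonnegative prices your dismissal is fine, and otherwise this one-line patch completes your argument.
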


For submodular SPP, the problem becomes significantly more difficult. It is known that submodular maximum facility location, which is a special case of SPP, is inapproximable to within any non-trivial factor~\cite{feige2013pass}. It is therefore highly unlikely that there exists an approximation of SPP, even one that holds with high probability. Thus, we relax our requirements and search for constant-factor bi-criteria approximations: 

\begin{definition}
\label{def:bi-criteria}
Let $\opt$ be an optimal query strategy. A query strategy $\alg$ is an $(\alpha,\beta)$-approximation to SPP if $\E[\alg]\geq \alpha \E[f(\opt\cap \act)]-\beta\cost[\opt]$, where the expectation is taken over $\act$ and any random choices of the algorithm.
\end{definition}

We demonstrate that slight variations of the algorithm for modular SP of Gupta and Naragajan~\cite{gupta2013} (using offline CRSs)  and the algorithm of Feldman et al. \cite{feldman2016online} (using online CRSs) give good bi-criteria approximations to
adversarial submodular SPP if $\Iin$ and $\Iout$ admit $(b,c_\i)$ and $(b,c_\o)$ CRSs. Interestingly, our variation of Gupta and Naragajan's algorithm increases its scope in two significant ways: we do not require the inner CRS to be ordered, nor the ability to choose the order in which the elements are presented. Thus, our version of their algorithm (which is called \textbf{Offline-Rounding} because it uses offline CRSs---Section \ref{sec:policy}) applies to online SPP.  

\begin{theorem}
\label{thm:spp}
Suppose $\P(\Iin)$ and $\P(\Iout)$ admit $(b,c_\i)$ and $(b,c_\o)$ CRSs
and let $\gamma=\max\{c_\o+c_\i-1,c_\o c_\i\}$. There exists a $(\gamma\cdot\alpha(b),b)$-approximation to adversarial, submodular SPP 
where $\alpha(b)=1-e^{-b}$ if the objective function is monotone, and $be^{-b}$ otherwise. 
\end{theorem}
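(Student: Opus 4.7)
The plan is to analyze both algorithms---Offline-Rounding and its online-CRS counterpart---through the same three-step pipeline: fractional relaxation, CRS-based rounding, and a separate cost accounting. The parameter $\gamma=\max\{c_\o+c_\i-1,\,c_\o c_\i\}$ simply selects whichever CRS-combination bound is stronger: offline via Bonferroni, online via independence of the two schemes.

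First I would set up the natural continuous relaxation. Let $x_e\in[0,1]$ denote the probability that $e$ is probed and $y_e := p_e x_e$ the marginal probability that $e$ enters the solution, so the relaxation reads $\max F(\y)-\langle\x,\bpi\rangle$ subject to $\x\in\P(\Iout)$ and $\y\in\P(\Iin)$, with $F$ the multilinear extension of $f$. The candidate $x^{\opt}_e := \Pr[\opt\text{ probes }e]$ is feasible with $\langle\x^{\opt},\bpi\rangle = \cost[\opt]$. Applying (measured) continuous greedy truncated at time $b$ then produces $\x^\star\in b\P(\Iout)$ and $\y^\star := \vb{p}\odot\x^\star\in b\P(\Iin)$ satisfying (i) $F(\y^\star)\geq\alpha(b)\,\E[f(\opt\cap\act)]$, the standard time-$b$ continuous-greedy guarantee (with $\alpha(b)=1-e^{-b}$ for monotone $f$ and $be^{-b}$ in the non-monotone case), and (ii) $\langle\x^\star,\bpi\rangle\leq\cost[\opt]$, because the algorithm moves along directions of increasing Lagrangian $F(\y)-\langle\x,\bpi\rangle$ and $\x^{\opt}$ is a feasible competitor on that Lagrangian.

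For the rounding step, the key observation is that $e$ enters the final solution $S$ only through three events: the outer CRS accepts (probability at least $c_\o$), $e$ is active (probability $p_e$), and the inner CRS accepts (probability at least $c_\i$). Applying the online CRSs sequentially on independent randomness gives $\Pr[e\in S]\geq c_\o c_\i\cdot bx^\star_e$; Offline-Rounding asks that both offline CRSs accept simultaneously and uses Bonferroni to get $\Pr[\text{both accept}]\geq 1-(1-c_\o)-(1-c_\i)=c_\o+c_\i-1$. In either case $\Pr[e\in S]\geq\gamma\cdot y^\star_e$, and the standard CRS-preserves-submodular-value lemma lifts this marginal bound to $\E[f(S)]\geq\gamma\,F(\y^\star)\geq\gamma\alpha(b)\,\E[f(\opt\cap\act)]$. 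For the cost, $e$ is probed only if the outer CRS accepts, so $\Pr[e\text{ probed}]\leq bx^\star_e$, and therefore $\E[\cost]\leq b\langle\x^\star,\bpi\rangle\leq b\,\cost[\opt]$. Subtracting yields the desired bi-criteria guarantee $\E[\alg]\geq\gamma\alpha(b)\,\E[f(\opt\cap\act)]-b\,\cost[\opt]$.

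The main obstacle is the simultaneous control of utility \emph{and} cost in the fractional step: continuous greedy is usually framed as pure utility maximization, and to apply it to the Lagrangian $F(\y)-\langle\x,\bpi\rangle$ one must verify that the standard potential argument still produces a fractional point dominating $\x^{\opt}$ on each coordinate of the bi-criteria bound separately. The cleanest workaround is to ``guess'' $\cost[\opt]$ to within a $(1+\eps)$ factor (polynomially many guesses suffice), impose $\langle\x,\bpi\rangle\leq\cost[\opt]$ as an extra linear constraint on the feasible polytope, and run continuous greedy on the utility alone within that restricted region; measured continuous greedy accommodates the extra linear constraint by standard arguments.
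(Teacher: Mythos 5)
Your overall architecture is the same as the paper's: obtain a fractional point handling utility minus cost by guessing the optimum's cost, adding it as a linear constraint, and running measured continuous greedy for time $b$ on the utility alone (your final-paragraph ``workaround'' is not a fallback but exactly the mechanism of the paper's Theorem~\ref{thm:submod+linear}, following Sviridenko et al.), then round with the two CRSs and bound the cost separately through $\Pr[e\in P]\leq x_e$. One small misattribution: in the paper the two bounds inside $\gamma$ are not tied to offline-vs-online rounding; both the product bound $c_\o c_\i$ (via FKG over the randomness of $R$, since conditioned on $R$ the two schemes are independent) and the union bound $c_\o+c_\i-1$ hold for any pair of monotone CRSs, and $\gamma$ is simply the better of the two (Lemma~\ref{lem:CRbound}).

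The genuine gap is the sentence invoking a ``standard CRS-preserves-submodular-value lemma'' to pass from $\Pr[e\in S]\geq\gamma y^\star_e$ to $\E[f(S)]\geq\gamma F(\y^\star)$. No such implication follows from marginals alone: correlations can destroy value. For instance, with $f(A)=\min\{|A|,1\}$ and $\y=(1/2,1/2)$, a rounding that outputs both elements or neither with probability $1/2$ matches every marginal exactly, yet $\E[f(S)]=1/2<3/4=F(\y)$. In the query-and-commit setting the needed statement is precisely what the paper labors to prove: it uses monotone CRSs, the coupling set $\EE$ (elements with negative marginals are not probed but are added to $S$ with a fictitious $p_e$-coin in \greedyProb, so that $R\cap\EE$ is distributed as $R\cap\act$), the auxiliary set $J(R,\EE)\subset S$ (Lemmas~\ref{lem:joint_monotone} and~\ref{lem:joint_solution}), and an element-by-element comparison $\E[f_{\overline{S}^{i-1}}(e_i)]\geq\gamma\,\E[f_{R^{i-1}(\x)\cap\act}(e_i)]$ established through a second application of FKG (Lemma~\ref{lem:marginal_approx}); this is also what makes the guarantee survive an adversarial probing order and a non-monotone $f$, neither of which your argument addresses. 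A secondary gap: your claim (i), $F(\y^\star)\geq\alpha(b)\,\E[f(\opt\cap\act)]$, is not the textbook continuous-greedy guarantee, which compares against the best integral set. The adaptive optimum is only dominated by the concave closure $f^+(\x^{\opt}\circ\p)$ (Lemma~\ref{lem:LP-relax}), so the fractional step must be proved against $g^+$ of a fractional competitor; this is the modification of the measured continuous greedy analysis carried out in the paper's appendix and needs justification rather than citation of the standard bound.
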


A note now on how our results translate to SP. When all prices are zero, Theorem~\ref{thm:spp} gives a $\gamma\cdot\alpha(b)$-approximation for online, submodular SP. In this case, if $\Iin$ and $\Iout$ are the intersection of $k$ and $\ell$ matroids, we demonstrate that the optimal value of $b$ is $\frac{1}{2}(k+\ell+2-\sqrt{(k+\ell)(k+\ell+4)})$ when $f$ is non-monotone, and $k+\ell+1-W((k+\ell)e^{k+\ell+1})$ when $f$ is monotone, where $W$ is the Lambert W function \cite{corless1996lambertw}.\footnote{$W$ is defined by $x = W(x) e^{W(x)}$. It is also called the product log function.} This gives a $(W(\rho)-k-\ell)^{k+\ell+1}/W(\rho)$-approximation in the monotone case, where $\rho=(k+\ell)e^{k+\ell+1}$, and something significantly uglier in the non-monotone case. These results are given in Section \ref{sec:approximations}.
These results outperform those of Feldman et al.~\cite{feldman2016online} (who study the adversarial setting) and Adamczyk and W\l{}odarcyzk~\cite{adamczyk2018random} (who study the random-order setting). 
This is also shown in Section \ref{sec:approximations}.
In the offline case, the approximation ratio of Adamczyk et al.~\cite{adamczyk2016} beats this approximation ratio, however,  
we remark that our results are the first pertaining to a non-monotone objective function in the adversarial setting. 

We hope our results provide a unified view of SP and SPP problems. The analyses use existing techniques which should be familiar to those working in submodular maximization which we hope makes the results easily accessible. 

\subsection{Proof Techniques}
For a modular objective function, there is little daylight between SP and SPP. The former is reduced to the latter by modifying the weights.
No such transformation can be made in the more general case of a submodular objective function however, which thus increases the difficulty of the problem. 

Similar to the algorithms presented in \cite{gupta2013,adamczyk2016,adamczyk2018random,feldman2016online} for SP, our algorithms proceed by obtaining a fractional to a suitably chosen LP-relaxation of the problem, and then rounding according to both CRSs.
The first major difficulty is that while $\E[f(P\cap A)]-\cost(P)$ is still a submodular function, it may not be non-negative. Since traditional algorithms for submodular maximization require that the objective function be non-negative, this is a hurdle which needs to be overcome. 
To do so we use the recent technique of Sviridenko et al.~\cite{sviridenko2017} in order to maximize a function of the form $f^+(\x\circ\p)- C(\x)$, where $f^+$ is concave and $C$ is linear.
Lemma \ref{lem:LP-relax} demonstrates that this objective function (subject to the appropriate constraints) provides an upper bound on the value achieved by the optimal policy.

\section{Preliminaries}
\label{sec:prelims}
Let $[n]=\{1,\dots,n\}$. Given a vector $\x\in[0,1]^\U$, let $R(\x)$ be a random set where each $e\in\U$ is selected with probability $x_e$. We will sometimes write $R\sim \x$ to refer to choosing a random set $R(\x)$. We call $p_e$ the \emph{activation probability of $e\in\N$}. We emphasize that it is independent of the activation probabilities of other elements.  
For a set $E\subset\N$ and element $e\in\N$, we will often write $E+e$ in place of the more cumbersome $E\cup\{e\}$. 
Given a (stochastic) probing algorithm $\alg$, we will abuse notation somewhat and write $\E[\alg]$ to mean the expected value of the solution obtained by querying according to $Q$, i.e., $\E[f(\alg\cap\act)-\cost(\alg)]$. Note that the expectation is over the joint distribution $\{p_e\}$ and any randomness in the choices made by the algorithm.

\subsection{Modular and Submodular Functions}
A function $f:2^\U\to\R$ is \emph{submodular} if for all $A,B\subset \U$ $f(A)+f(B)\geq f(A\cap B)+f(A\cup B).$
If this inequality holds with equality, then $f$ is \emph{modular}.  
Given $E\subset\U$, write $f_E(e)$ for $f(E+e)-f(E)$.
We will consider two extensions of a set function $f:2^\U\to \R$ to $[0,1]^\U$. The first is the \emph{multilinear extension} \cite{ageev2004pipage,calinescu2007maximizing}: \[F(\y)=\sum_{A\subset \U}f(A)\prod_{e\in A}y_e\prod_{e\in A^c}(1-y_e)=\E_{R\sim\y}[f(R)]\]. 
The second is the concave closure of $f$ \cite{chekuri2011}: 
\[
    f^+(\y)=\max_{S\subset\N}\bigg\{\sum_{S\subset \U}p_Sf(S):p_S\geq 0\;\forall S\subset\U,\; \sum_{S\subset\U}p_S=1,\; \sum_{S:e\in S}p_S= y_e\;\forall e\in \U\bigg\}.\] 
    
As indicated by the wording $f^+$ is concave. It is well known that $F(\y)\leq f^+(\y)$. Indeed, $f^+(\y)$ can be viewed as selecting the probability distribution over $2^\U$ maximizing $\E[f(E)]$ subject to the constraint that $\Pr[e\in E]\leq y_e$, while $F(\y)$ is the value of $\E[f(E)]$ under a particular such distribution. 

\subsection{Contention Resolution Schemes}
\label{sec:prelims_crschemes}
Let $\I\subset2^\N$ be downward-closed. The \emph{polytope relaxation of $\I$} is the set $\P(\I)\subset[0,1]^\U$ defined as the convex hull of all characteristic vectors of $\I$. 


For the rest of this section fix a downward closed set system $\I$, and let $\P(\I)$ be its convex relaxation. We now introduce the formal definitions of offline and online CRSs. 

\begin{definition}[Offline CRS \cite{chekuri2011}]
\label{def:CRS}
 For $b,c\in[0,1]$, a $(b,c)$ \emph{offline Contention Resolution scheme} for $\P(\I)$ is family of (possibly randomized) functions $\{\pi_{\x} :2^U\to 2^U:\x\in b\P(\I)\}$ such that for all $\x\in b\P(\I)$ and $E\subset \U$: 
(1)     $\pi_{\x}(E) \subset E\cap \supp(\x)$;
(2) $\pi_{\x}(E)\in\I$ with probability 1; and 
(3) for all $e\in\supp(\x)$, $\Pr_{R(\x),\pi}[e\in\pi_{\x}(R(\x))|e\in R(\x)]\geq c$.
\end{definition}

The CRS $\pi$ is \emph{monotone} if for all $e\in E_1\subset E_2$, $\Pr_\pi[e\in \pi(E_1)]\geq \Pr_\pi[e\in \pi(E_2)]$. 
We emphasize that in condition (3) of Definition \ref{def:CRS}, the probability is over both the random set $R(\x)$ and the CRS, whereas in the definition of monotonicity of a CRS, the probability is taken only over the (possibly) random choices of scheme itself.                                   

\begin{definition}[Online CRS \cite{feldman2016online}]
For $b,c\in[0,1]$, a $(b,c)$ \emph{online contention resolution scheme} of $\P(\I)$ is a procedure which defines for any $\x\in\P(\I)$ a family $\I_{\x}\subset\I$ such that for all $e\in\U$, $\Pr[I +e\in\I_{\x} \text{ for all } I\subset R(\x), I\in\I_{\x}] \geq c$.
\end{definition}

Given an online CRS of $\P(\I)$ the authors of \cite{feldman2016online} define a related offline CRS, called the \emph{characteristic} CRS $\pi(E)=\{e\in E:I+e\in \I_{\pi,\x},\text{ for all }I\subset E,I\in\I_{\pi,x}\}$. They verify that the characteristic CRS meets the conditions of an offline CRS and demonstrate that the characteristic CRS of a $(b,c)$ online CRS is a monotone, $(b,c)$ CRS. 
If we refer to a CRS without specifying whether it is online or offline, the discussion should apply to both. 
Unless otherwise stated, we will assume that any CRS discussed in this paper is \emph{efficient}: that it can be computed in polynomial time. 





\subsection{A general probing strategy}
\label{sec:greedy_prob}
A \emph{permutation} or \emph{ordering} on a subset $E\subset\N$ is a bijection $\sigma:[n]\to E$. To say \emph{traverse $E$ in the order of $\sigma$} should be taken to mean iterate over $E$ in the order $\sigma(1),\sigma(2),\dots,\sigma(|E|)$. The following probing procedure will be repeated often enough throughout the paper to warrant a name. \\

\textbf{$\greedyProb(\sigma,\I_1,\I_2 )$ on $E\subset \U$}. Let $P\gets \emptyset$ and $S\gets\emptyset$. For all $e\in E$ in the order $\sigma$ if $P+e\in\I_1$ and $S+e\in\I_2$, then 
\begin{enumerate}
    \item if $f_S(e)\geq 0$ then probe $e$ and set $P\gets P+e$. If $e$ was active, set $S\gets S+e$. 
    \item if $f_S(e)<0$, then add $e$ to $S$ with probability $p_e$. 
\end{enumerate}
Return $S,P$. \\

We observe that this strategy does in fact produce a valid set of probed elements and a valid solution. 

\begin{observation}
\label{obs:greedy_prob}
Let $S$ and $P$ be the sets returned by \greedyProb. Let $\overline{S}$ be the set of elements queried by \greedyProb which were active. Then $P\cap\act=\overline{S}\subset S$.  Moreover, $S\in\I_2$ hence $\overline{S}\in\I_2$ and $P\in\I_1$ (assuming $\I_1$ and $\I_2)$ are downward closed).  
\end{observation}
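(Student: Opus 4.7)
The plan is to simply unpack the definition of \greedyProb\ and verify each claim by a straightforward induction on the iteration index. The observation has three separate assertions: the equality $P \cap \act = \overline{S}$, the containment $\overline{S} \subset S$, and the feasibility claims $S \in \I_2$ and $P \in \I_1$ (from which $\overline{S} \in \I_2$ follows by downward-closedness). None of these should require any real effort; I expect the entire argument to be a few lines of bookkeeping.

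For the first assertion $P \cap \act = \overline{S}$, I would note that the definition of $\overline{S}$ is literally ``the probed elements that were active,'' so this is immediate from the definition, together with the fact that only the active probed elements (case 1, second half) are added to $S$ via probing. For $\overline{S} \subset S$: whenever case 1 fires and $e$ is active, the algorithm explicitly sets $S \gets S + e$, so every element of $\overline{S}$ is inserted into $S$; elements added via case 2 are not in $\overline{S}$ but simply pad $S$ further, preserving the containment.

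For feasibility, I would argue by induction on the loop iteration: the algorithm only ever adds $e$ to $P$ when the guard $P + e \in \I_1$ holds, and only ever adds $e$ to $S$ (in either case 1 or case 2) when the guard $S + e \in \I_2$ holds. Hence $P \in \I_1$ and $S \in \I_2$ at termination. Since $\I_2$ is downward-closed and $\overline{S} \subset S \in \I_2$, we get $\overline{S} \in \I_2$ for free.

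The only subtlety worth flagging, though it isn't really an obstacle, is that in case 2 the algorithm places $e$ in $S$ without ever probing it — so $S$ is not the solution but a superset of the solution $\overline{S}$. It is important to check the guard $S + e \in \I_2$ is tested against this (potentially inflated) $S$ rather than $\overline{S}$, which is exactly how the algorithm is written; this is what guarantees the stronger containment $S \in \I_2$ and hence, by downward-closedness, the desired $\overline{S} \in \I_2$.
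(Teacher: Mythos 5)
Your proof is correct and is exactly the straightforward definition-unpacking the paper has in mind: the paper states this as an Observation with no written proof, treating it as immediate from the guards in \greedyProb{} and the downward-closedness of $\I_1,\I_2$. Your inductive bookkeeping (and your note that the guard $S+e\in\I_2$ is tested against the inflated set $S$, not $\overline{S}$) fills in precisely the intended argument.
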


\section{Modular SPP}
\label{sec:modular_spp}
The main result of this section is to verify the intuitive result that when the objective function, $f$, is modular, we can obtain the same approximation ratio to the optimal as in SP (i.e., with no prices). Recall that if $f$ is modular, there exist weights $(w_e)_{e\in\U}$ such that $f(E)=\sum_{e\in E}w_e$. Here, we will not draw a distinction between online and offline SPP. This is because the results apply in both settings. We will reduce an instance of SPP to an instance of SP. The result requires only a change of variables, and thus only knowledge of the weights, prices, and activation probabilities in the SPP instance. Thus, if the algorithm for SP is online or applies in any other setting\footnote{For example, stochastic probing with deadlines has been examined \cite{gupta2013,feldman2016online}.}, so too does the corresponding SPP algorithm. 

We begin with an observation that no algorithm worth its salt will query elements whose expected weight is upper bounded by their price.

\begin{lemma}
\label{lem:noDebt}
If $Q$ is a query strategy which queries $e$ where $w_ep_e\leq \price_e$, there exists a query strategy $Q'$ which does not query $e$ such that $\E[Q']\geq \E[Q]$. 
\end{lemma}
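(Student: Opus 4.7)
The plan is to surgically modify $Q$: at the moment $Q$ would probe $e$, have $Q'$ skip the probe and instead continue by simulating whichever of $Q$'s two post-probe continuations has higher expected value. The guiding intuition is that because $f$ is modular and $\cost$ is linear, the direct contribution of probing $e$ to the objective is $p_e w_e - \price_e$, which is non-positive by hypothesis. Moreover, since $\Iin$ and $\Iout$ are downward-closed, skipping $e$ only relaxes these constraints for subsequent probes, so anything $Q$ would have done after probing $e$ remains a legal course of action for $Q'$.

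Formally, let $T$ be the (possibly infinite) stopping time at which $Q$ probes $e$, and condition on the history $H_{T-1}$ generated up to that moment. On $\{T = \infty\}$, set $Q'$ to be an exact copy of $Q$. On $\{T < \infty\}$, write $Q_A$ and $Q_I$ for the two continuations of $Q$ following a history in which $e$ is probed and found active, or probed and found inactive, respectively. Define $Q'$ to skip $e$ and thereafter simulate whichever of $Q_A, Q_I$ has the larger conditional expectation (the choice depends only on $H_{T-1}$, which is known before the continuation begins). Concretely, $Q'$ asks at each step what action the chosen simulated strategy would take on a counterfactual history in which $e$ had been probed with the appropriate outcome, and executes that action in $Q'$'s true state, in which $e$ is absent from $P$ and $S$. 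Validity is clear: the probed and solution sets carried by $Q'$ are subsets (missing only $e$) of those carried by the simulated strategy, so any probe legal under $\Iin$ or $\Iout$ for the simulation is \emph{a fortiori} legal for $Q'$ by downward-closedness.

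Write $V_Q, V_{Q_A}, V_{Q_I}$ for the (random) objective values of the respective strategies, and abbreviate $E_A := \E[V_{Q_A}\mid H_{T-1}]$ and $E_I := \E[V_{Q_I}\mid H_{T-1}]$. On $\{T < \infty\}$ one computes directly
\[
\E[V_Q \mid H_{T-1}] - \E[V_{Q'} \mid H_{T-1}] = (p_e w_e - \price_e) + \bigl(p_e E_A + (1-p_e) E_I - \max\{E_A, E_I\}\bigr).
\]
The first summand is non-positive by hypothesis, and the second is non-positive because a maximum dominates any convex combination. Averaging over $H_{T-1}$ and over the cases $\{T < \infty\}, \{T = \infty\}$ yields $\E[Q'] \geq \E[Q]$.

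The main obstacle is adaptivity: $Q$ may probe $e$ at a data-dependent time and only with some probability, so the construction must be framed at a stopping time, and $Q'$ must be able to simulate a continuation of $Q$ that ``thinks'' $e$ is present in its state even though it is not. The downward-closed structure of $\Iin$ and $\Iout$ is precisely what legitimises this fictitious simulation as a valid SPP strategy; once validity is in hand, the max-dominates-convex-combination inequality handles the value comparison with essentially no further work.
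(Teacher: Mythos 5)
Your proposal is correct and follows essentially the same route as the paper: construct $Q'$ by mimicking $Q$ while skipping the probe of $e$, use downward-closedness of $\Iin$ and $\Iout$ for validity, and observe that the skipped probe contributes $w_ep_e-\price_e\leq 0$ in expectation. Your explicit treatment of adaptivity (framing the skip at a stopping time and following the better of the two post-probe continuations, rather than simulating the probe's outcome) is a careful elaboration of what the paper's one-line ``mimicking'' argument leaves implicit, not a different method.
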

\begin{proof}
Given $Q$, define $Q'$ to be the strategy obtained by mimicking $Q$ but refraining from querying any element $e$ such that $w_ep_e\leq\price_e$. It follows that $Q'$ is still a valid strategy because $\Iin$ and $\Iout$ are downward-closed. Moreover, any element queried by $Q$ which is not queried by $Q'$ adds a non-positive expected value to the solution. 
\end{proof}

Henceforth we will apply Lemma \ref{lem:noDebt} and assume that for all elements $e$, $w_ep_e>\price_e$. In this section we will allow prices to be negative; therefore, it is not necessarily the case that $p_e>0$. This section requires that we make comparisons between solutions of SSP and SP, and hence need to introduce the relevant notation. For a probing strategy $Q$, let $\E[Q^\sp(\w,\p)]$ denote the expected value in the SP setting with weights $\w=(w_e)_{e\in\U}$ and probabilities $\p=(p_e)_{e\in\U}$. Similarly, let $\E[Q^\spp(\w,\p,\bpi)]$ denote the expected value of the solution in the SSP setting with weights $(w_e)$, probabilities $(p_e)$ and prices $(\price_e)$. 

Given an instance of SSP, we define new weights and activation probabilities as follows. For all $e$, let $z_e=w_e-\price_e/p_e$ if $p_e>0$ and $-\price_e$ otherwise and let $\hat{p}_e = p_e$ for all $e$ with $p_e>0$, and $\hat{p}_e=1$ for $e$ with $p_e=0$.

\begin{lemma}
\label{lem:sp=spp}
For any querying strategy $Q$ and any weights $\w$, activation probabilities $\p$ and prices $\bpi$,
$\E[Q^\spp(\w,\p,\bpi)]=\E[Q^\sp(\z,\hat{\p})]$. 
\end{lemma}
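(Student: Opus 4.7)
The plan is to combine linearity of expectation with a coupling across the two instances. The key observation is that for any adaptive probing strategy $Q$, the decision to probe element $e$ depends only on the observed activation statuses of previously probed elements, which by independence are independent of the activation status of $e$ itself. Thus, letting $X_e$ denote the indicator that $Q$ probes $e$, we have
\begin{align*}
\E[Q^\spp(\w,\p,\bpi)] &= \sum_{e\in\U} \Pr[X_e=1]\cdot(w_e p_e - \price_e), \\
\E[Q^\sp(\z,\hat{\p})] &= \sum_{e\in\U} \Pr[X_e=1]\cdot z_e\hat{p}_e.
\end{align*}
The goal is then to show that (i) the per-element coefficients match, i.e.\ $w_e p_e - \price_e = z_e \hat{p}_e$ for every $e$, and (ii) the probabilities $\Pr[X_e=1]$ are the same in both instances.

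For (i), just plug in the definitions: if $p_e > 0$ then $z_e\hat{p}_e = (w_e-\price_e/p_e)\cdot p_e = w_e p_e - \price_e$; if $p_e = 0$ then $z_e\hat{p}_e = (-\price_e)\cdot 1 = w_e\cdot 0 - \price_e$. For (ii), I would construct an explicit coupling: for each $e$ with $p_e > 0$, use a single Bernoulli$(p_e)$ random variable $Y_e$ to serve as both the SPP activation status (which has parameter $p_e$) and the SP activation status (which has parameter $\hat{p}_e = p_e$). Under this coupling, the histories of observations available to $Q$ at every step coincide between the two processes, so (conditioning also on the internal randomness of $Q$) the indicators $X_e$ are literally equal as random variables. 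Taking expectations then gives the matching probabilities in (ii), and combining with (i) yields the claim.

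The step I expect to be the main obstacle is the handling of elements with $p_e = 0$: in the SPP instance such an element is always inactive, while in the transformed SP instance it has $\hat{p}_e = 1$ and is always active, so the coupling above breaks down and the two processes could feed $Q$ different histories and alter the $\Iin$-feasibility of subsequent probes. I would address this by appealing to Lemma~\ref{lem:noDebt}, which lets us assume $w_e p_e > \price_e$; when $p_e = 0$ this forces $\price_e < 0$, but probing such an $e$ consumes outer capacity for a negative-price reward and, in SPP, never places $e$ into the inner solution—so without loss of generality we may restrict attention to strategies $Q$ that, in both instances, treat these elements identically (e.g.\ by probing them first, subject only to the outer constraint, and then ignoring their ``active'' label in SP when checking $\Iin$). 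With that reduction in place, the coupling goes through on all remaining elements, and the term-by-term comparison above completes the proof.
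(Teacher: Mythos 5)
For the elements with $p_e>0$ your argument is essentially the paper's proof: both rest on linearity of expectation, the per-element identity $z_e\hat{p}_e=w_ep_e-\price_e$, and the observation that the decision to probe $e$ is determined by the history observed before $e$'s status is revealed and is therefore independent of $e$'s own activation, so $\Pr[e\in Q\cap\act]=p_e\Pr[e\in Q]$. Your explicit coupling merely makes precise what the paper does implicitly when it uses the same quantity $\Pr[e\in Q]$ for both instances, and on this case it is fine.

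The gap is exactly where you predicted it, and your proposed repair does not close it. The lemma quantifies over \emph{every} strategy $Q$, so you may not ``restrict attention'' to strategies that treat the $p_e=0$ elements identically in the two instances; and ``ignoring the active label when checking $\Iin$'' is not a legal move in SP, where a probed active element is irrevocably added to the solution and the solution must remain in $\Iin$. The obstacle is real and cannot be coupled away: take $\N=\{a,b\}$, $\Iout=2^{\N}$, $\Iin=\{\emptyset,\{a\},\{b\}\}$, $p_a=0$, $\price_a=-1$ (so $z_a=1$, $\hat{p}_a=1$), and $p_b=1$, $w_b=10$, $\price_b=0$ (so $z_b=10$, $\hat{p}_b=1$); note $w_ep_e>\price_e$ holds for both elements, so Lemma~\ref{lem:noDebt} does not remove $a$. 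The strategy ``probe $a$, then probe $b$ whenever feasible'' earns $1+10=11$ in the SPP instance, but in the SP instance $(\z,\hat{\p})$ the element $a$ is surely active, enters the solution, and blocks $b$ under $\Iin$, so the value is $1$: the claimed equality fails, both because the observed history (hence an adaptive strategy's behaviour) changes and because inner feasibility changes. The paper's own proof silently avoids this by asserting ``$p_e>0$ for all $e$''; if you add that hypothesis, your coupling argument is complete and matches the paper, but the $p_e=0$, $\price_e<0$ elements genuinely require a different treatment than the substitution $z_e=-\price_e$, $\hat{p}_e=1$.
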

\begin{proof}
Let $\hat{\act}$ be the random variable denoting the set of active elements according to the probabilities $\hat{\p}$. Note that $p_e>0$ for all $e$, hence $\Pr[e\in Q\cap\hat{\act}]/\hat{p}_e=\Pr[e\in Q]$. Unwinding definitions now gives
\begin{align*}
    \E[Q^\spp&(\w,\p,\bpi)] = \sum_{e\in\N}w_e\Pr[e\in Q\cap \act]-\price_e\Pr[e\in Q]\\
    &=\sum_{e:p_e>0}(w_ep_e-\price_e)\Pr[e\in Q]-\sum_{e:p_e=0}\price_e\Pr[e\in Q]\\
    &=\sum_{e\in\U}z_e\Pr[e\in Q\cap\hat{\act}] =\E[Q^\sp(\z,\hat{\p})].\qedhere
\end{align*}
\end{proof}

We can now prove Theorem \ref{thm:sp=spp}.

\begin{proof}
Let $\alg$ be an algorithm for linear SP which obtains an $\alpha$-approximation. Let $\w,\p,\bpi$ be an instance of linear SPP. Running $\alg$ on weights $\z$ and probabilities $\hat{\p}$ and applying Lemma \ref{lem:sp=spp} gives 
\begin{align*}
    \E[\alg^\spp(\w,\p,\bpi)]=\E[\alg^\sp(\vb{z},\hat{\vb{p}})] 
    &\geq \alpha\E[\opt^\sp(\vb{z},\hat{\vb{p}})] = \alpha \E[\opt^\spp(\vb{w},\vb{p},\bpi)].\qedhere
\end{align*}
\end{proof}

\section{Submodular SPP}
\label{sec:submod_spp}
We now proceed to the more general problem of submodular SPP. We assume in this section that the prices are non-negative. We will employ the common approach of solving a relaxed linear program (Section \ref{sec:fractional_solution}), and then rounding the solution to obtain a probing policy (Section \ref{sec:policy}). First we observe a natural upper bound on the value of the optimal solution, against which we can gauge the quality of approximations. It will be notationally convenient to work with a single polytope instead of both $\P(\Iin)$ and $\P(\Iout)$. Accordingly, we will henceforth let $\P$ refer to the polytope
\[\bigg\{\x\in[0,1]^\U:\x\in\P(\Iout),\;\x\circ\p\in\P(\Iin)\bigg\}.\]

Let $C$ be the (multi)linear extension of $\cost$. Thus $C(\x)=\sum_{e\in\U}\price_ex_e$. A natural relaxation of submodular SPP is the following program: 
\begin{equation}
    \label{eq:submod_relax}
    \max_{\x}\bigg\{ f^+(\x\circ \p)-C(\x): \x\in\P\bigg\}. \tag{LP}
\end{equation}

The relationship between SPP algorithms and LP is given by the following lemma. 

\begin{lemma}
\label{lem:LP-relax}
If $\alg$ be any (stochastic) probing strategy then there exists a point $\x\in \P$ such that  $\E[f(\alg\cap\act)]\leq f^+(\x\circ\p)$ and $\E[\cost(\alg)]=\sum_{e\in\N}\price_ex_e$.
\end{lemma}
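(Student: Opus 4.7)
The plan is to take $x_e := \Pr[e \in P_{\alg}]$, where $P_\alg \subseteq \N$ denotes the (random) set of probed elements produced by $\alg$, with probability taken over $\act$ and any internal randomness of $\alg$. Everything then reduces to checking that this $\x$ meets the three requirements.

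First I would verify that $\x\in\P$. By definition of the SPP model, $P_\alg\in\Iout$ with probability one, so $\x=\E[\ind_{P_\alg}]$ is a convex combination of characteristic vectors of sets in $\Iout$, hence $\x\in\P(\Iout)$. For the inner constraint, I would argue that whether $\alg$ decides to probe $e$ depends only on the outcomes of previous probes and on $\alg$'s internal randomness, and is therefore independent of whether $e$ itself is active (since $e$'s status is revealed only if and when $e$ is probed); combined with the independence of activations across elements this gives $\Pr[e\in P_\alg\cap\act]=\Pr[e\in P_\alg]\cdot p_e = x_e p_e$ for every $e$. Since $P_\alg\cap\act\in\Iin$ with probability one, it follows that $\x\circ\p=\E[\ind_{P_\alg\cap\act}]\in\P(\Iin)$, and therefore $\x\in\P$.

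The cost identity is then immediate from linearity of expectation:
\begin{equation*}
\E[\cost(\alg)] = \E\Big[\sum_{e\in P_\alg}\price_e\Big] = \sum_{e\in\N}\price_e\Pr[e\in P_\alg] = \sum_{e\in\N}\price_e x_e = C(\x).
\end{equation*}

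For the concave-closure bound, the random set $P_\alg\cap\act$ induces a probability distribution over $2^\N$ that is supported on $\Iin$ and has per-element marginals equal to $\Pr[e\in P_\alg\cap\act]=x_e p_e=(\x\circ\p)_e$ by the independence argument above. Since $f^+(\y)$ is the maximum of $\sum_S p_S f(S)$ over distributions with marginals $\leq y_e$ (as noted in the paragraph following the definition of $f^+$), this induced distribution is feasible for the defining program of $f^+(\x\circ\p)$, and hence
\begin{equation*}
\E[f(\alg\cap\act)] = \E[f(P_\alg\cap\act)] \leq f^+(\x\circ\p).
\end{equation*}

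The only subtle point is the independence claim used to derive $\Pr[e\in P_\alg\cap\act]=x_ep_e$: one must invoke both the per-element independence of activations and the fact that $\alg$ cannot condition on $e$'s activation without having probed $e$. Once this is in hand, the membership $\x\in\P$, the cost identity, and the $f^+$ inequality are routine consequences of the definitions.
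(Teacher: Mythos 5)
Your proposal is correct and follows essentially the same route as the paper's own proof: the same choice $x_e=\Pr[e\in\alg]$, the same key identity $\Pr[e\in\alg\cap\act]=x_ep_e$, and the same observation that the distribution of $\alg\cap\act$ is feasible for the program defining $f^+(\x\circ\p)$. If anything, you are more explicit than the paper about why the probing decision for $e$ is independent of $e$'s own activation, which is a welcome clarification rather than a deviation.
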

\begin{proof}
Define $\x$ and $\y$ by $x_e=\Pr[e\in\alg]$ and $y_e=\Pr[e\in \alg\cap\act]$. It is immediate that $\x\in\P(\Iin)$ and $\y\in\P(\Iout)$ if $\alg$ is a valid strategy. First, we claim that $\x$ is feasible solution. For this it suffices to show that $y_e=x_ep_e$. Recall that by the constraints of the problem, if an element is queried, then its addition to the current set of probed elements and to the solution must be allowed by the constraints of $\Iout$ and $\Iin$ respectively. 
Thus, $y_e=\Pr[e\in \alg\cap\act]=\Pr[e\in \alg]\Pr[e\in \act]=x_e p_e$, where we've used the fact that whether a particular element is active or not is fixed a priori. Now, notice that $\E[f(\alg\cap\act)]=\E_{S\sim D}[f(S)]$ where $D$ is a particular distribution such that $\Pr_{S\sim D}[e\in S]\leq p_ex_e$. Conversely, $f^+(\x\circ\p)$ is the maximum over all such distributions, i.e., $f^+(\x\circ\p)=\max_D\E_{S\sim D}[f(S)]$. Therefore, $f^+(\x\circ\p)\geq \E[f(\opt\cap\act)]$. Moreover, by the linearity of $\cost$ it's easy to see that $\E[\cost(\alg)]=\sum_{e\in\N}\price_e\Pr[e\in \alg]=\sum_{e\in\N}\price_ex_e$. 
\end{proof}

\subsection{Obtaining a fractional solution}
\label{sec:fractional_solution}
While the problem of maximizing a (non-monotone) submodular function subject to various constraints has been the subject of intense study (e.g., \cite{chekuri2011,feldman2011,lee2009non,feige2011maximizing}), less is known about combinations of submodular functions. In our case, the difficulty in solving \eqref{eq:submod_relax} efficiently arises because the function $f^+(\x\circ\p)-C(\x)$ is not necessarily non-negative. 
Removing the non-negativity condition in a non-monotone submodular maximization problem makes the problem intractable in general, since, as noted in \cite{chekuri2011}, it may take an exponential number of queries to determine whether the optimum is greater than zero. We must therefore take advantage of the special form of our problem; namely the fact that $C$ is linear. 

Recently, Sviridenko et al. gave an approximation algorithm for maximizing the sum of a non-negative, normalized, monotone, submodular function $f$ and a linear function over a matroid constraint \cite[Theorem 3.1]{sviridenko2017}. 
More precisely, given a submodular function $g$, a linear function $\ell$ and a matroid $\I$, with high probability they obtain a set $E$ such that $g(E)+\ell(E)\geq (1-1/e)g(I)+\ell(I)$ minus an arbitrarily small constant term, for any base $I\in\I$. The idea is elegant and straightforward, and involves using the traditional continuous greedy algorithm but over the polytope $\P\cap\{\x:\ell(\x)\geq \lambda\}$ (rather than simply $\P$) where $\lambda$ is a guess for the value of $\ell(\opt)$. Intuitively, this guarantees that the fractional solution $\x^*$ satisfies $L(\x^*)\geq \ell(\opt)$ (where $L$ is the linear extension of $\ell$). Somewhat surprisingly, restricting the polytope in this way does not damage the approximation to $f$. 

Using Measured Continuous Greedy~\cite{feldman2011} instead of continuous greedy and the modification of its proof used in \cite{adamczyk2015} and \cite{adamczyk2018random}, we are able to extend the approach of Sviridenko et al. to 
non-monotone submodular functions and arbitrary constraints. The result is summarized as Theorem \ref{thm:submod+linear}. We say a function $G:[0,1]^\N\to\R$ can be \emph{efficiently estimated with high probability} if for any $\x\in[0,1]^\U$, $G(\x)$ can be determined to within exponentially small error with a polynomial number of queries. For example, if we are given oracle access to a submodular function $f$, then its multilinear extension can be efficiently estimated with high probability (see, e.g., \cite{chekuri2011}).

\begin{theorem}
\label{thm:submod+linear}
Let $g:2^\N\to\R_{\geq0}$ be a normalized submodular function with multilinear extension $G$ which can be efficiently estimated with high probability. Let $\ell:2^\N\to\R$ be a non-decreasing modular function with (multi)linear extension $L$. For any $\eps>0$, $T\geq 0$, and downward-closed system $\I\subset 2^\N$, there exists a polynomial time algorithm which produces a point $\x/T\in \P(\I)$ such that \[G(\x)-L(\x)\geq \alpha(T)g^+(\y)-TL(\y)-O(\eps)R,\]
for any $\y\in\P$ with high probability, where
$R=\max\{\max_{e\in \N}g(e),\max_{e\in\N}|cost(e)|\}$ and $\alpha(T)=1-e^{-T}$ if $g$ is monotone, and $Te^{-T}$ otherwise. 
\end{theorem}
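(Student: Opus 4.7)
The plan is to adapt Sviridenko et al.'s guess-and-restrict idea to our non-monotone, arbitrary-downward-closed setting, using Measured Continuous Greedy (MCG) in place of standard continuous greedy. Fix $\y^\star \in \P(\I)$ maximizing $\alpha(T)g^+(\y) - TL(\y)$; since the RHS of the theorem is largest at $\y^\star$, it is enough to certify the inequality against this single $\y^\star$. Because $L(\y^\star) \in [0,nR]$, we enumerate candidate values $\lambda$ on a grid of spacing $\eps R / T$, yielding $O(Tn/\eps)$ guesses, and the correct guess $\lambda^\star$ satisfies $L(\y^\star) \le \lambda^\star \le L(\y^\star) + \eps R/T$. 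For each candidate $\lambda$ we run the procedure below, and finally output the $\x$ maximizing $G(\x) - L(\x)$ (estimated to additive error $O(\eps)R$ by polynomial sampling).

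For each guess $\lambda$, run MCG for time $T$ on the restricted polytope $\P_\lambda = \P(\I) \cap \{\z : L(\z) \le \lambda\}$, in discrete steps of size $\Theta(\eps/n)$, with $\nabla G$ estimated to accuracy $O(\eps R/n)$ by polynomial sampling. At each step we compute $I^\star(t) = \argmax\{\nabla G(\x(t)) \cdot \z : \z \in \P_\lambda\}$ via the separation oracle for $\P(\I)$ augmented by the halfspace $L(\z) \le \lambda$, and then update $\dot x_e(t) = I^\star_e(t)(1 - x_e(t))$. Because $\P(\I)$ is downward closed and each $I^\star(t) \in \P(\I)$, integration gives $\x(T) \in T\P(\I)$, so $\x(T)/T \in \P(\I)$ as required. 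For the right guess $\y^\star$ lies in $\P_{\lambda^\star}$, and the standard MCG analysis---yielding $\alpha(T) = 1-e^{-T}$ for monotone $g$ via \cite{calinescu2007maximizing} and the sharper $\alpha(T) = Te^{-T}$ for non-monotone $g$ via \cite{feldman2011}, combined with the concave-closure comparison used in \cite{adamczyk2015,adamczyk2018random}---delivers $G(\x(T)) \ge \alpha(T)\, g^+(\y^\star) - O(\eps)R$. Meanwhile, since $\ell \ge 0$ and $1 - e^{-a} \le a$ for $a \ge 0$, $L(\x(T)) \le \int_0^T L(I^\star(t))\, dt \le T\lambda^\star \le TL(\y^\star) + \eps R$. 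Subtracting these two bounds yields the theorem.

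The hardest step will be carrying the MCG guarantee through with $g^+$ (rather than merely $G$) on the restricted polytope $\P_\lambda$. In the monotone case the submodular gradient inequality $\sum_e y^\star_e (1 - x_e(t))\,\partial_e G(\x(t)) \ge g^+(\y^\star) - G(\x(t))$, combined with $\y^\star \in \P_{\lambda^\star}$ and the usual integrating-factor argument, gives the $1 - e^{-T}$ rate directly. For non-monotone $g$ one must additionally exploit the MCG invariant $x_e(t) \le 1 - e^{-t}$, which upgrades the slack $1 - x_e(t)$ to $e^{-t}$ and produces the $Te^{-T}$ rate upon integration, exactly as in the non-monotone analyses of \cite{feldman2011,adamczyk2018random}. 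Adding the constraint $L(\z) \le \lambda$ to $\P(\I)$ does not disturb either argument, since $\y^\star$ remains feasible throughout and every use of the polytope only needs $\y^\star \in \P_{\lambda^\star}$; the remaining ingredients---discretization of MCG, sampling $\nabla G$, and guess enumeration---are standard and absorbed into the $O(\eps)R$ slack.
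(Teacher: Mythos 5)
Your proposal matches the paper's proof in all essentials: guess the value of the linear part on a polynomial grid, run Measured Continuous Greedy over $\P(\I)$ intersected with the corresponding halfspace, compare the gradient step against the distribution achieving $g^+(\y^\star)$, use the invariant $x_e(t)\leq 1-e^{-t}$ for the non-monotone rate $Te^{-T}$ (and the standard monotone argument for $1-e^{-T}$), and bound the linear term by $T$ times the guess. The only (harmless) deviations are bookkeeping: you use a purely additive grid and an over-estimate of $L(\y^\star)$ (keeping $\y^\star$ feasible and paying an extra $\eps R$), while the paper uses a mixed grid and an under-estimate.
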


The proof of Theorem \ref{thm:submod+linear} can be found in the Appendix. 


\subsection{Obtaining a policy}
\label{sec:policy}

While Theorem \ref{thm:submod+linear} guarantees the existence of a \emph{fractional} solution which gives a good approximation to the value of the optimal policy, it does not tell us how to query the elements. This section focuses on extracting a probing policy from this fractional solution. We give two rounding techniques, which apply depending on what kind of CRSs to which one has access (i.e., online or offline). Let $\x\in b\P$. \\



\textbf{Offline-Rounding.} Let $\piin$ and $\piout$ be offline CRSs for $\Iin$ and $\Iout$ respectively. Draw $R\sim \x$ and compute $Q=\piout(R)$. 
Run $\greedyProb(\sigma,2^\U)$ on $\piin(R)\cap\piout(R)$ where $\sigma$ is any ordering. \\

\textbf{Online-Rounding}. Let $\Iin^{\x}\subset\Iin$ and $\Iout^{\x}\subset \Iout$ be the (random) subsets given by the inner and outer CRSs. If $\piout$ is not online, take $\Iout^{\x}=\Iout$. Draw $R\sim \x$. Run $\greedyProb(\sigma,\Iout^{\x},\Iin^{\x})$ on $R$ if $\piout$ is online, and on $Q$ otherwise where $\sigma$ is any ordering (even adaptively and adversarially chosen). \\

\textbf{Analysis}. 
For either rounding technique, let $P$, $S$, and $\overline{S}$ be as in Observation \ref{obs:greedy_prob}. Given any run of the algorithm, let $\EE$ be defined as those elements which were either probed and active, or else had negative marginals but were nonetheless added to $S$. Note that $\EE$ is distributed as $R(\p)$ and thus as $\act$. However, it might not be the case that $\EE=\act$ because not all elements in $\EE$ were actually probed. Note that $P$, $S$ and $\overline{S}$ are actually functions of $R$ and $\EE$, and it will oftentimes be helpful to write them as such (i.e., $S=S(R,\EE)$). Finally, we define $J(R,\EE)=\piin(R)\cap\piout(R)\cap \EE$, where $\piin$ and $\piout$ were the inner and outer CRSs used in the rounding technique (the given schemes in the case of offline rounding, and the characteristic schemes in the case of online rounding). The following lemma uses properties of the CRSs to obtain the inequality which is crucial to the main result. 

\begin{lemma}
\label{lem:CRbound}
Let $\piin$ and $\piout$ be monotone $(b,c_\i)$ and $(b,c_\o)$ CRSs respectively. Let $\x\in b\P$ and set $R=R(\x)$. Then $\Pr[e\in \piin(R)\cap \piout(R)|e\in R]\geq \max\{c_\o+c_\i-1,c_\o\cdot c_\i\}$. 
Moreover, $\Pr[e\in P]\leq x_e$ where $P$ is the set of elements probed by the algorithm. 
\end{lemma}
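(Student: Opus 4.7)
The plan is to handle the two lower bounds in the max separately. The easier of the two is the bound $c_\i+c_\o-1$, which follows from a union bound: by condition~(3) of Definition~\ref{def:CRS}, $\Pr[e\notin\piin(R)\mid e\in R]\leq 1-c_\i$ and $\Pr[e\notin\piout(R)\mid e\in R]\leq 1-c_\o$, so
\[
\Pr\!\left[e\in\piin(R)\cap\piout(R)\mid e\in R\right] \;\geq\; 1-(1-c_\i)-(1-c_\o) \;=\; c_\i+c_\o-1.
\]

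The main work lies in the bound $c_\i\cdot c_\o$, which is where monotonicity really enters. The idea is that, because $\piin$ and $\piout$ are applied with independent internal randomness, once we condition on the random set $R$ the events $\{e\in\piin(R)\}$ and $\{e\in\piout(R)\}$ decouple. Writing $\phi_\i(R')=\Pr_{\piin}[e\in\piin(R'+e)]$ and $\phi_\o(R')=\Pr_{\piout}[e\in\piout(R'+e)]$ for $R'\subseteq\N\setminus\{e\}$, this gives
\[
\Pr\!\left[e\in\piin(R)\cap\piout(R)\mid e\in R\right] \;=\; \E_{R'}\!\left[\phi_\i(R')\,\phi_\o(R')\right],
\]
where $R'$ is the product-distributed random subset of $\N\setminus\{e\}$ obtained by including each $e'$ independently with probability $x_{e'}$. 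Monotonicity of $\piin$ and $\piout$ says that $\phi_\i$ and $\phi_\o$ are both non-increasing functions on the Boolean lattice $2^{\N\setminus\{e\}}$. Since $R'$ follows a product measure, I would invoke the FKG (Harris) inequality to conclude $\E[\phi_\i\phi_\o]\geq \E[\phi_\i]\cdot\E[\phi_\o]\geq c_\i c_\o$, using the CRS guarantees $\E[\phi_\i]\geq c_\i$ and $\E[\phi_\o]\geq c_\o$ one last time.

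Finally, the bound $\Pr[e\in P]\leq x_e$ is essentially immediate from inspection of the two rounding procedures: in both Offline- and Online-Rounding, \greedyProb only ever probes elements that already lie in $R$ (in the offline case via the subset $\piin(R)\cap\piout(R)\subseteq R$, in the online case directly on $R$ or on $Q\subseteq R$), so $P\subseteq R$ and hence $\Pr[e\in P]\leq\Pr[e\in R]=x_e$. The main obstacle is really the correlation step: one must carefully separate the two sources of randomness, using independence of the CRSs' internal coins for the conditional decoupling and monotonicity for the FKG application under the product measure on $R'$; neither ingredient alone suffices.
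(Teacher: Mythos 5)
Your proposal is correct and takes essentially the same route as the paper: the $c_\i+c_\o-1$ bound via a union bound, the $c_\i c_\o$ bound by decoupling the two schemes' internal randomness conditional on $R$ and then applying monotonicity plus the FKG inequality under the product measure, and $P\subseteq R$ for the final claim. Your explicit conditioning on $e\in R$ via the product measure on $\N\setminus\{e\}$ is just a slightly more careful write-up of the paper's argument.
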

\begin{proof}
When conditioning on the choice of the random set $R$, the randomness in $\piin(R)$ and $\piout(R)$ stems purely from the CRSs themselves. Hence, the events $\{e\in \piin(R)|R\}$ and $\{e\in\piout(R)|R\}$ are independent. Additionally, notice that given $R$ and $e\in R$, the quantity $\Pr[e\in \piin(R)]$ is monotonically decreasing in $R$. Combining these two facts and 
conditining on $e\in R$ gives
\begin{align}
    &\Pr_{R,\piin,\piout}[e\in \piin(R)\cap \piout(R)]\\
    &=\E_R\big[\Pr_{\piin,\piout}[e\in \piin(R)\cap \piout(R)|R]\big] \notag\\
    &=\E_R\big[\Pr_{\piin}[e\in \piin(R)|R]\Pr_{\piout}[e\in\piout(R)|R]\big]\notag\\
    &\geq \E_R[\Pr_{\piin}[e\in\piin(R)|R]]\cdot\E_R[\Pr_{\piout}[e\in\piout(R)|R]]\notag\\
    &=\Pr_{R,\piin}[e\in\piin(R)]\Pr_{R,\piout}[e\in\piout(R)], \label{eq:lem_crbound1}
\end{align}
using the FKG inequality. 
This implies that $\Pr[e\in\piin(R)\cap\piout(R)|e\in R]\geq c_\o\cdot c_\i$. To obtain the other bound, we compute 
\begin{align*}
    &\Pr[e\in\piin(R)\cap\piout(R)] \\
    &= \Pr[e\in \piout(R)] - \Pr[e\in \piin(R)^c\cap \piout(R)] \\
    &\geq c_\o-\Pr[e\in\piin(R)^c]
    = c_\o - (1-\Pr[e\in\piin(R)]) \\
    &\geq c_\o-(1-c_\i).
\end{align*}
Rearranging gives the desired result. The final statement follows from noticing that $P\subset R$ in either rounding scheme, and $\Pr[e\in R]=x_e$. 
\end{proof}

Instead of reasoning directly about the set of elements in the solution, we will reason about the set $J(R,\EE)$. The following two lemmas prove useful properties about this set. 

\begin{lemma}
\label{lem:joint_monotone}
Given $R_1\subset R_2$ and $\EE_1\subset\EE_2$ with $e\in R_1\cap\EE_1$, $\Pr[e\in J(R_1,\EE_1)]\geq \Pr[e\in J(R_2,\EE_2)]$.  
\end{lemma}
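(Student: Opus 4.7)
The plan is to peel off the role of $\EE$ first, then reduce to a statement about the two CRSs alone, and finally apply monotonicity of each CRS together with their independence. Since $e \in \EE_1 \subset \EE_2$, the event $\{e \in \EE_i\}$ occurs with probability one for $i=1,2$, so the randomness that actually governs $\Pr[e \in J(R_i,\EE_i)]$ is only the internal randomness of $\piin$ and $\piout$. Thus the lemma is equivalent to
\[
\Pr[e \in \piin(R_1) \cap \piout(R_1)] \;\geq\; \Pr[e \in \piin(R_2) \cap \piout(R_2)].
\]

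Next I would invoke independence of the two CRSs, exactly as used in the proof of Lemma \ref{lem:CRbound}, to factor each side as a product $\Pr[e \in \piin(R_i)]\cdot\Pr[e \in \piout(R_i)]$. At this point the result is immediate from monotonicity: since $e \in R_1 \subset R_2$, the definition of a monotone CRS gives $\Pr[e \in \piin(R_1)] \geq \Pr[e \in \piin(R_2)]$ and likewise $\Pr[e \in \piout(R_1)] \geq \Pr[e \in \piout(R_2)]$. Multiplying the two inequalities yields the claim.

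The only step requiring justification beyond unpacking definitions is that the CRSs used in the rounding procedures are monotone. For Offline-Rounding we are free to assume that $\piin$ and $\piout$ are given monotone $(b,c_\i),(b,c_\o)$-CRSs (any $(b,c)$-CRS can be replaced by a monotone one at no loss, as shown by Chekuri et al.~\cite{chekuri2011}). For Online-Rounding we use the characteristic CRSs of the online schemes, which Feldman et al.~\cite{feldman2016online} verified to be monotone offline CRSs with the same parameters. So no real obstacle arises here; the main thing to be careful about is to state the reduction cleanly so that the FKG-style independence argument from Lemma \ref{lem:CRbound} can be reused without modification.
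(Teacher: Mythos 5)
Your argument is correct and is essentially the paper's own proof: with $R_i,\EE_i$ fixed the only randomness is in the schemes, the event $e\in\EE_i$ is deterministic, independence of $\piin$ and $\piout$ factors the probability into a product, and monotonicity of each CRS applied to $R_1\subset R_2$ gives the inequality. Your closing remark about replacing an arbitrary CRS by a monotone one is not needed (and is not something you should lean on in general): monotonicity is already a standing hypothesis here, since Lemma \ref{lem:CRbound} assumes monotone schemes and the online case uses the characteristic CRSs, which are monotone.
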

\begin{proof}
The randomness comes only from the CRSs, since $R_1,R_2,\EE_1$ and $\EE_2$ are given. Therefore, $\Pr_{\piin,\piout}[e\in\piin(R_1)\cap\piout(R_1)\cap \EE_1]=\Pr_{\piin}[e\in\piin(R_1)]\Pr_{\piout}[e\in\piout(R_1)]\geq\Pr_{\piin}[e\in\piin(R_2)]\Pr_{\piout}[e\in\piout(R_2)]=\Pr_{\piin,\piout}[e\in\piin(R_2)\cap\piout(R_2)\cap \EE_2]$.
\end{proof}

\begin{lemma}
\label{lem:joint_solution}
For any $R$ and $\EE$, 
$J(R,\EE)\subset S(R,\EE)$. 
\end{lemma}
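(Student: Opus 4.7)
The plan is to fix an arbitrary $e \in J(R,\EE) = \piin(R) \cap \piout(R) \cap \EE$ and trace through the execution of $\greedyProb$ to confirm that $e$ ends up in $S(R,\EE)$. The basic structure of the argument has three parts: (i) $e$ is among the elements iterated over, (ii) when $e$ is reached, both feasibility checks in $\greedyProb$ pass, and (iii) conditional on both (i) and (ii), membership in $\EE$ forces membership in $S$.

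For (i), observe that $\piin(R), \piout(R) \subset R$: in the offline case this is condition (1) of Definition \ref{def:CRS}, and in the online case it is built into the definition of the characteristic CRS. Hence $e$ lies in $\piin(R) \cap \piout(R) \subset R$ (or in $Q = \piout(R)$), which is precisely the set $\greedyProb$ iterates over under either Offline-Rounding or Online-Rounding, so $e$ will indeed be considered at some point.

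For (ii), in Offline-Rounding the constraint sets are trivial ($2^\U$) so the feasibility tests in $\greedyProb$ are automatic. In Online-Rounding the tests are against $\Iout^{\x}$ and $\Iin^{\x}$, and here I would use the characteristic CRS property: if $P$ and $S$ denote the probed set and the solution at the moment $e$ is considered, a straightforward induction along $\sigma$ shows $P \in \Iout^{\x}$, $S \in \Iin^{\x}$, and $P, S \subset R$. The definition of the characteristic scheme then says that $e \in \piout(R)$ forces $P + e \in \Iout^{\x}$ and $e \in \piin(R)$ forces $S + e \in \Iin^{\x}$, so both guards in $\greedyProb$ succeed.

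For (iii), note that whichever branch of $\greedyProb$ is executed at $e$, the element is added to $S$ exactly when its Bernoulli$(p_e)$ coin is heads: in the $f_S(e) \geq 0$ branch this coin is realized by probing (heads iff $e \in \act$), and in the $f_S(e) < 0$ branch it is the explicit probability-$p_e$ toss. Since $\EE$ is, by construction, precisely the collection of elements whose activation coin came up heads, $e \in \EE$ yields $e \in S$. The main obstacle is (ii) in the online regime; the inductive maintenance of $P \in \Iout^{\x}$ and $S \in \Iin^{\x}$ along $\sigma$ is what lets the characteristic-CRS property kick in and is the only non-bookkeeping step of the argument.
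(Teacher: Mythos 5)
Your proposal is correct and follows essentially the same route as the paper's proof: reduce to showing that $e$ survives the feasibility guards of \greedyProb (trivially in Offline-Rounding, and via the characteristic-CRS definition with $I=S_e$, $I=P_e$ in Online-Rounding), after which membership in $\EE$ forces $e$ into $S$. Your explicit inductive maintenance of $P,S\subset R$ with $P\in\Iout^{\x}$, $S\in\Iin^{\x}$ is just a more careful spelling-out of what the paper leaves implicit.
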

\begin{proof}
Let $e\in\joint(R,\EE)$. We first observe that regardless of the rounding technique, if $e$ advances to having its marginal considered in \greedyProb then, by definition of $\EE$, it will be added to $S=S(R,\EE)$. Therefore, it remains only to show that $e$ meets the condition $P_e+e\in\I_1$ and $S_e+e\in\I_2$ in \greedyProb, where $P_e\subset P$ and $S_e\subset S$ are the respective intermediary solutions immediately before $e$ is considered by the algorithm.
In the case of offline rounding, this is immediate since \greedyProb is run on $\piout(R)\cap\piin(R)$ which is in $\Iout=\I_1$ and $\Iin=\I_2$ (since $\piout(R)\subset \Iout$ and $\piout(R)\subset \Iin$ w.p. 1).
Now consider online rounding with two online schemes. Here, recall that $\piin(R)=\{e\in R:I+e\in \Iin,\forall I\subset R,I\in\Iin\}$. Taking $I=S_e$ we see that $S_e+e\in\Iin$. Similarly, $P_e+e\in\Iout$. The argument for online rounding when $\piout$ is offline is similar. 
\end{proof}

This final technical lemma derives a lower bound on the marginal of our solution with respect to that of the optimal's. The main approximation guarantee will then result from decomposing the objective function into the sum of its marginals and applying the following lemma.  

\begin{lemma}
\label{lem:marginal_approx}
Let $\x\in b\P$ and let $\{e_1,e_2,\dots,e_n\}$ be any ordering on $\U$. 
For every $e_i\in\U$, $\E[f_{\overline{S}^{i-1}} (e_i)]\geq \gamma\E[f_{R^{i-1}(\x)\cap \act}(e_i)]$ where $\gamma=\max\{c_\o+c_\i-1,c_\o c_\i\}$.
\end{lemma}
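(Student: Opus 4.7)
My plan is to combine submodularity with the CRS retention bound from Lemma~\ref{lem:CRbound}, splitting the argument into two reductions.

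First (submodularity), define $\joint^{i-1} := \piin(R)\cap\piout(R)\cap\act\cap\{e_1,\dots,e_{i-1}\}$ via the natural coupling $\EE \equiv \act$. Every element of $\overline{S}^{i-1}$ was probed and is active, so it lies in $R$, passes both CRS filters, and is active, yielding $\overline{S}^{i-1}\subset\joint^{i-1}$. Submodularity of $f$ implies that $A\mapsto f_A(e_i)$ is monotone non-increasing, so pointwise $f_{\overline{S}^{i-1}}(e_i)\geq f_{\joint^{i-1}}(e_i)$, and therefore $\E[f_{\overline{S}^{i-1}}(e_i)]\geq\E[f_{\joint^{i-1}}(e_i)]$.

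Second (CRS bound), I argue $\E[f_{\joint^{i-1}}(e_i)]\geq \gamma\,\E[f_{R^{i-1}\cap\act}(e_i)]$. Conditioning on $R$ and $\act$, let $X := R^{i-1}\cap\act$; then $\joint^{i-1}\subset X$, and each $e\in X$ is retained in $\joint^{i-1}$ with probability at least $\gamma$ by Lemma~\ref{lem:CRbound}. Telescoping $f_X(e_i) = f_\emptyset(e_i) + \sum_{e\in X}(f_{X_{<e}\cup\{e\}}(e_i) - f_{X_{<e}}(e_i))$ along any ordering of $X$, and controlling each conditional contribution through FKG monotonicity of the CRSs (as in the proof of Lemma~\ref{lem:CRbound}), followed by absorbing the baseline using $f_\emptyset(e_i)=f(e_i)\geq 0$ (non-negativity and normalization of $f$), yields the claimed bound after taking outer expectations over $R$ and $\act$.

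The main difficulty lies in the second step. The elementwise retention bound $\gamma$ from Lemma~\ref{lem:CRbound} holds across \emph{correlated} CRS retention events rather than independent Bernoulli samplings, so a direct concavity-along-rays argument for multilinear extensions is unavailable. Moreover, even when $f$ is submodular, the marginal function $A\mapsto f_A(e_i)$ is in general neither submodular nor supermodular, so such an argument would not apply to this marginal directly. Navigating these obstructions requires an elementwise FKG step in the style of Lemma~\ref{lem:CRbound}, combined with non-negativity $f(e_i)\geq 0$ to absorb the residual baseline term.
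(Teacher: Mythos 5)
There is a genuine gap, and it sits exactly where you place ``the main difficulty.'' The content of this lemma lives in the indicators that your plan drops: as it is used in the proof of Theorem~\ref{thm:spp}, the two sides are really $\E[\ind(e_i\in\overline{S})f_{\overline{S}^{i-1}}(e_i)]$ and $\E[\ind(e_i\in R\cap\act)f_{R^{i-1}\cap\act}(e_i)]$, and the factor $\gamma$ is the probability that \emph{$e_i$ itself} survives both CRSs given $e_i\in R$ (Lemma~\ref{lem:CRbound} applied to $e_i$), not a per-element discount on the preceding elements $e_1,\dots,e_{i-1}$. Your second step applies the retention bound to the elements of $R^{i-1}\cap\act$ instead; note that once the indicators are ignored, the comparison you are trying to force is already ``free'' in the wrong sense, since $\joint^{i-1}\subseteq R^{i-1}\cap\act$ gives $f_{\joint^{i-1}}(e_i)\ge f_{R^{i-1}\cap\act}(e_i)$ pointwise by submodularity with no $\gamma$ at all --- a sign that the decomposition has lost the actual difficulty. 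The telescoping you propose produces second-order differences of $f$ that have no sign (as you yourself concede, $A\mapsto f_A(e_i)$ is neither sub- nor supermodular), and ``controlling each conditional contribution through FKG'' is not an argument that yields a multiplicative $\gamma$; nothing in Lemma~\ref{lem:CRbound} converts elementwise retention of the first $i-1$ elements into a multiplicative loss on a marginal. In addition, the containment $\overline{S}^{i-1}\subseteq\joint^{i-1}$ in your first step fails under Online-Rounding: the characteristic CRS keeps $e$ only if $I+e\in\I_{\x}$ for \emph{every} feasible $I\subseteq R$, which is stronger than the greedy feasibility check $P_e+e\in\Iout^{\x}$, $S_e+e\in\Iin^{\x}$, so a probed active element need not lie in $\piin(R)\cap\piout(R)$; the paper only ever uses the reverse-type containment $\joint(R,\EE)\subseteq S(R,\EE)$ (Lemma~\ref{lem:joint_solution}).

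The paper's route, which is what you would need, is: condition on $e_i\in R\cap\EE$; use the greedy rule (elements are probed only when their current marginal is nonnegative) to replace $f_{\overline{S}^{i-1}}(e_i)$ by $\max\{0,f_{\overline{S}^{i-1}}(e_i)\}$, and by submodularity (since $\overline{S}^{i-1}\subseteq R^{i-1}\cap\EE$) lower bound this by $\phi(R,\EE)=\max\{0,f_{R^{i-1}\cap\EE}(e_i)\}$; pass from $\ind(e_i\in\overline{S})$ to $\ind(e_i\in S)$ --- this is the purpose of the fictitious probability-$p_e$ additions in \greedyProb --- so that Lemma~\ref{lem:joint_solution} gives $\ind(e_i\in S)\ge\ind(e_i\in\joint(R,\EE))$; then apply FKG once over the product randomness of $(R,\EE)$ to the two decreasing functions $\E_\pi[\ind(e_i\in\joint(R,\EE))\mid R,\EE]$ (Lemma~\ref{lem:joint_monotone}) and $\phi(R,\EE)$, which decouples the survival probability of $e_i$ (at least $\gamma$ by Lemma~\ref{lem:CRbound}, using independence of $\{e_i\in\EE\}$ from the CRS events) from the expected marginal. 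Your proposal contains none of this indicator bookkeeping, truncation, or the single decisive FKG application to $e_i$'s survival, and as written it does not establish the lemma.
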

\begin{proof}
Fix $i\in[n]$ and let $e=e_i$. Set $R=R(\x)$. We have: 
\begin{align*}
    \E_{\EE,\pi}[f_{\overline{S}^{i-1}}(e)] 
    &= \Pr[e\in R\cap \EE]\E_{\EE,\pi}[\ind(e\in \overline{S})f_{\overline{S}^{i-1}}(e)|e\in R\cap \EE] \\
    &= \Pr[e\in R\cap 
    \EE]\E_{\EE,\pi}[\ind(e\in \overline{S})\max\{0,f_{\overline{S}^{i-1}}(e)\}|e\in R\cap \EE] \\
    &= \Pr[e\in R\cap \EE]\E_{\EE,\pi}[\ind(e\in S)\max\{0,f_{\overline{S}^{i-1}}(e)\}|e\in R\cap \EE] \\
    &\geq \Pr[e\in R\cap \EE]\E_{\EE,\pi}[\ind(e\in S)\max\{0,f_{R^{i-1}\cap \EE}(e)\}|e\in R\cap \EE],
\end{align*}
where the third equality follows from the fact that if $e\in S\setminus\overline{S}$ then $\max\{0,f_{\overline{S}^{i-1}}(e)\}=0$, and the final inequality follows from submodularity.
Let $\phi(R,\EE)=\max\{0,f_{R^{i-1}\cap \EE}(e)\}$. Now, condition on $e\in R\cap \EE$ and write
\begin{align*}
    \E_{R,\EE,\pi}[\ind(e\in \overline{S}(R,\EE))\cdot\phi(R,\EE)] 
    &= \E_{R,\EE}\big[\E_{\pi}[\ind(e\in \overline{S}(R,\EE))\cdot\phi(R,\EE)|R,\EE]\big] \\
    &= \E_{R,\EE}\big[\E_\pi[\ind(e\in \overline{S}(R,\EE))|R,\EE]
    \cdot\phi(R,\EE)\big] \\ 
    &\geq  \E_{R,\EE}\big[\E_\pi[\ind(e\in \joint(R,\EE)|R,\EE]\cdot \phi(R,\EE)\big],
\end{align*}
where the final inequality follows from Lemma \ref{lem:joint_solution}. Both $\phi(R,A)$ and  $\E[\ind(e\in J(R,\EE)|R\cap\EE]$ are decreasing functions of $R,\EE$ (the former from submodularity and the latter from Lemma \ref{lem:joint_monotone}). Therefore, by the FKG inequality, the above is at least 
\begin{align*}
    &\E_{R,\EE}[\E_\pi[\ind(e\in \joint(R,\EE))|R,\EE]|e\in R\cap \EE]\cdot \E_{R,\EE}[\phi(R,\EE)|e\in R\cap\EE] \\
    &= \E_{R,\EE,\pi}[\ind(e\in \joint(R,\EE))|e\in R\cap \EE]\cdot \E_{R,\EE}[\phi(R,\EE)|e\in R\cap \EE] \\
    &= \E_{R,\pi}[\ind(e\in \piin(R)\cap\piout(R))|e\in R]\cdot \E_{R,\EE}[\phi(R,\EE)|e\in R\cap \EE] \\
    &\geq \gamma \E_{R,\EE}[\phi(R,\EE)|e\in R\cap\EE],
\end{align*}
by Lemma \ref{lem:CRbound},
where the final equality uses the fact that the events $\{e\in\EE\}$ and $\{e\in\piin(R)\cap\piout(R)\}$ are independent. Combining everything and keeping in mind that $R\cap\EE$ is distributed as $R\cap \act$, we obtain
\begin{align*}
    \E[f_{S^{i-1}}(e)]\geq
    &=\gamma \Pr[e\in R\cap \EE]\E[\phi(R,\EE)|e\in R\cap \EE]\geq \gamma \E[f_{R^{i-1}(\x)\cap \act}(e)],
\end{align*}
as desired. 
\end{proof}

\subsection{Approximation Guarantees}
\label{sec:approximations}
Given the rounding policies presented in the previous section, we are now ready to prove Theorem \ref{thm:spp}, and explore the guarantees given in the case where $\Iin$ and $\Iout$ are the intersection of matroids. 

\subsubsection{Proof of Theorem \ref{thm:spp}}

\begin{proof}
Apply Theorem \ref{thm:submod+linear} with $G(\x)=F(\x\circ \p)$ and $L(\x)=C(\x)$ and run CRS-Rounding on the resulting point $\x\in b\cdot\P$. Then, applying Lemmas \ref{lem:CRbound} and \ref{lem:marginal_approx} gives 
\begin{align*}
    \E[\alg]&=\E[f(\alg\cap \act)]-\E[\cost(\alg)]\\
    &=f(\emptyset)+\sum_{i=1}^n \E[f_{\alg^{i-1}\cap \act}(e_i)] -\sum_{e\in\U}\price_e\Pr[e\in \alg]\\
    &\geq  f(\emptyset)+\gamma\sum_{i=1}^n  \E[f_{R^{i-1}\cap \act}(e_i)] - \sum_{e\in\U}\price_ex_e \\
    &= \gamma \E[f(R(\x)\cap \act)]-C(\x).
\end{align*}
Applying Lemma \ref{lem:LP-relax}, let $\x^*$ satisfy $\E[f(\opt\cap\act)]\leq f^+(\x^*\circ\p)$ and $\E[\cost(\opt)]=\sum_{e\in\N}\price_ex_e^*$, where $\opt$ is an optimal probing strategy.
Noticing that $\E[f(R(\x)=F(\x\circ\p)$,  we have 
\begin{align*}
    \gamma\E[f(R(\x)\cap \act)]-C(\x) 
    &\geq \gamma\alpha(b)f^+(\x^*\circ\p)-b L(\x^*)-\gamma O(\eps) R, 
\end{align*}
with high probability by Theorem \ref{thm:submod+linear}. Lemma \ref{lem:LP-relax} implies that $\alg$ is a $(\gamma \alpha(b),b)$-approximation. 

\end{proof}

\subsubsection{Intersection of matroids for SP}
Here we focus on traditional Stochastic Probing when $\Iin$ and $\Iout$ are the intersection of $k$ and $\ell$ matroids, respectively, we have the existence of $(b,1-b)$ offline CRSs and online CRSs, for any $b\in[0,1)$ \cite{feldman2016online,gupta2013}. 
For any fixed value of $b$, Theorem \ref{thm:spp} gives a 
$\max\{(1-b)^{k+\ell},(1-b)^k+(1-b)^\ell-1\}(1-e^{-b})$
approximation. 
By an easy induction, we see that for all $k,\ell\in\mathbb{N}$, $(1-b)^{k+\ell}\geq (1-b)^k+(1-b)^\ell-1$. Therefore, in the case of matroids, it's always more optimal to use the online CRSs than the offline CRSs and we obtain a $\max_{b\in[0,1)}\{(1-b)^{k+\ell}(1-e^{-b})\}$ approximation. Next, we determine the optimal value of $b$ when the objective function is monotone.

\begin{lemma}
\label{lem:kl_matroid_nonm}
For any $k,\ell\in\mathbb{N}$, 
\[(k+\ell+1)-W((k+\ell)e^{k+\ell+1})=\argmax_{b\in(0,1]}\{(1-b)^{k+\ell}(1-e^{-b})\}.\]
\end{lemma}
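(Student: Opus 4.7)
The plan is purely a calculus problem: differentiate, set equal to zero, and solve by a Lambert-$W$ substitution. Let $n=k+\ell$ and set $g(b)=(1-b)^{n}(1-e^{-b})$. Since $g(0)=g(1)=0$ and $g(b)>0$ on the open interval $(0,1)$, the maximum is achieved at an interior critical point, so it suffices to find the unique zero of $g'$ in $(0,1)$.

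Computing the derivative by the product rule and factoring out $(1-b)^{n-1}$,
\begin{equation*}
g'(b) = (1-b)^{n-1}\bigl[-n(1-e^{-b}) + (1-b)e^{-b}\bigr]
      = (1-b)^{n-1}\bigl[(n+1-b)e^{-b} - n\bigr].
\end{equation*}
For $b\in(0,1)$ the factor $(1-b)^{n-1}$ is nonzero, so the equation $g'(b)=0$ reduces to $(n+1-b)e^{-b}=n$, i.e.\ $(n+1-b)=n e^{b}$. Substituting $u=n+1-b$ gives $u = n e^{n+1-u}$, or equivalently
\begin{equation*}
u e^{u} = n e^{n+1}.
\end{equation*}
By the defining identity of the Lambert $W$ function this yields $u = W\bigl(n e^{n+1}\bigr)$, and unwinding the substitution produces
\begin{equation*}
b^{*} = n+1 - W\bigl(n e^{n+1}\bigr) = (k+\ell+1) - W\bigl((k+\ell)e^{k+\ell+1}\bigr).
\end{equation*}

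To finish, I would verify that $b^{*}\in(0,1)$ and that it is indeed the argmax. Monotonicity of $W$ gives $W(n e^{n+1}) > W(n e^{n}) = n$ (since $n e^{n}$ is the preimage of $n$ under $u\mapsto ue^{u}$), so $b^{*}<1$; and $W(n e^{n+1}) < W((n+1)e^{n+1}) = n+1$, so $b^{*}>0$. Because $g$ is continuous, positive on $(0,1)$, and vanishes at the endpoints, and because the bracketed factor $(n+1-b)e^{-b}-n$ is strictly decreasing in $b$ (its derivative is $-(n+2-b)e^{-b}<0$), the critical point $b^{*}$ is unique in $(0,1)$ and must be the global maximum. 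I do not anticipate a genuine obstacle; the only subtlety is the bookkeeping in the $u$-substitution to recognise the Lambert-$W$ form, which is really the whole content of the lemma.
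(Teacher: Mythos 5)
Your proposal is correct and follows essentially the same route as the paper's proof: differentiate, reduce the critical-point equation to $(k+\ell+1-b)e^{-b}=k+\ell$, and solve via the Lambert-$W$ substitution, then rule out the endpoints where the function vanishes. Your added checks that $b^*\in(0,1)$ and that the bracketed factor is strictly decreasing are a slightly more careful justification of uniqueness than the paper gives, but the argument is the same.
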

\begin{proof}
Let $z=k+\ell$ and set $\xi(b)=(1-b)^z(1-e^{-b})$. Solving $\xi'(b)=0$ yields $(1-b)^ze^{-b}=z(1-b)^{z-1}(1-e^{-b})$, i.e., $(1-b+z)e^{-b}=z$ which is solved by $b=z+1-W(ze^{z+1})\equiv b^*$. The only other critical points of $\xi$ in the region $0\leq b\leq 1$ are $b=0,1$, which yield $\xi(b)=0$. Thus, $b^*$ is the unique maximizer of $\xi(b)$. 
\end{proof}

\begin{corollary}
If $\Iin$ and $\Iout$ are the intersection of $k$ and $\ell$ matroids then there exists a $(W(ze^{z+1})-z)^{z+1}\big/W(ze^{z+1})$-approximation to non-monotone, adversarial SP where $z=k+\ell$. 
\end{corollary}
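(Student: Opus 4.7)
The plan is to specialize Theorem \ref{thm:spp} to the no-price setting of ordinary SP, invoke the optimal choice of $b$ supplied by Lemma \ref{lem:kl_matroid_nonm}, and simplify the resulting expression using the defining identity of the Lambert $W$ function.

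First, setting all prices to zero in Theorem \ref{thm:spp} collapses its bi-criteria guarantee into a genuine $\gamma\cdot\alpha(b)$-approximation, where $\gamma=\max\{c_\i+c_\o-1,\,c_\i c_\o\}$. The standard online CRS for an intersection of $k$ matroids has parameters $(b,(1-b)^k)$, so using the online schemes for both $\Iin$ and $\Iout$ we have $c_\i=(1-b)^k$ and $c_\o=(1-b)^\ell$. The paper has already observed that $(1-b)^{k+\ell}\geq (1-b)^k+(1-b)^\ell-1$ for all $k,\ell\in\mathbb{N}$ and $b\in[0,1]$ (an easy induction on $k+\ell$), so the product term dominates and the approximation ratio reduces to $(1-b)^{z}\,\alpha(b)$ with $z=k+\ell$.

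Next, plugging in the form $\alpha(b)=1-e^{-b}$ that appears in the lemma, the task becomes maximizing $\xi(b)=(1-b)^z(1-e^{-b})$ over $b\in(0,1]$. Lemma \ref{lem:kl_matroid_nonm} identifies the unique interior maximizer as $b^*=z+1-W(ze^{z+1})$. Writing $W=W(ze^{z+1})$ for brevity, we immediately obtain $1-b^*=W-z$.

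The final step is to compute $1-e^{-b^*}$ using the defining identity $We^W=ze^{z+1}$. This gives $e^W=ze^{z+1}/W$, hence $e^{-b^*}=e^{W-z-1}=z/W$, so $1-e^{-b^*}=(W-z)/W$. Substituting back,
\[
\xi(b^*)=(W-z)^z\cdot \frac{W-z}{W}=\frac{(W-z)^{z+1}}{W},
\]
which is exactly the ratio stated in the corollary. I do not foresee any real obstacle: the argument is a three-step chain combining Theorem \ref{thm:spp}, the matroid-intersection CRS bound, and the lemma, with the only delicate part being the algebraic manipulation invoking the Lambert identity in the last line.
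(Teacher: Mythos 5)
Your proof is correct and takes essentially the same route as the paper: specialize Theorem~\ref{thm:spp} with zero prices and the $(b,(1-b)^k)$, $(b,(1-b)^\ell)$ CRSs for matroid intersections, note that $(1-b)^{k+\ell}$ dominates $(1-b)^k+(1-b)^\ell-1$, and evaluate $(1-b)^z(1-e^{-b})$ at the maximizer $b^*=z+1-W(ze^{z+1})$ from Lemma~\ref{lem:kl_matroid_nonm}, with the Lambert identity giving $e^{-b^*}=z/W$ and hence the stated ratio. As a minor point in your favor, your $b^*$ matches the lemma, while the paper's one-line proof writes $b=k+\ell-W((k+\ell)e^{k+\ell+1})$, which appears to drop the $+1$.
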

\begin{proof}
Evaluate the approximation ratio at $b=k+\ell-W((k+\ell)e^{k+\ell+1})$. 
\end{proof}

We now observe that the above approximation is state-of-the-art compared to known results for online, monotone SP. Recall that Feldman et al. give a $c_\i c_\o\cdot(1-e^{-b})$-approximation in the adversarial setting, and Adamczyk and W\l{}odarcyzk~\cite{adamczyk2018random} give a $\frac{1}{(k+\ell+1)e}$-approximation. The former is less than $\max\{c_\i c_\o,c_\i+c_\o-1\}(1-e^{-b})$. For the latter we use the following lemma. 

\begin{lemma}
\label{lem:monotone_sp_better}
$\max_{b \in [0,1]} (1 - e^{-b})(1 - b)^{k + \ell} \ge 1/(k + \ell + 1) e$.
\end{lemma}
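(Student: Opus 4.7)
The plan is to avoid the explicit maximizer from Lemma~\ref{lem:kl_matroid_nonm} (which is hard to manipulate algebraically) and instead lower-bound the maximum by evaluating $\xi(b) := (1 - e^{-b})(1-b)^{k+\ell}$ at a single convenient test point. Letting $z = k+\ell$, a natural choice is $b = 1/(z+1)$, so that $1 - b = z/(z+1)$; this makes the factor $(1-b)^{z+1}$ close to $1/e$ via the familiar fact $(1+1/z)^{z+1} \ge e$. With this choice the target inequality becomes
\[
  (1-e^{-1/(z+1)}) \left(\tfrac{z}{z+1}\right)^z \;\ge\; \tfrac{1}{(z+1)e}.
\]

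The next step is to linearize the exponential. I would apply the elementary bound $1 - e^{-x} \ge x - x^2/2$ for $x \ge 0$ (easily verified by differentiating twice and observing all the relevant values at $0$ are non-negative) at $x = 1/(z+1)$, which gives $1 - e^{-1/(z+1)} \ge (1/(z+1))(1 - 1/(2(z+1)))$. Substituting this in and writing $m = z+1$ reduces the problem to the purely algebraic inequality
\[
  (1 - 1/(2m))(1 - 1/m)^{m-1} \;\ge\; 1/e,
\]
or, after taking logarithms, $\ln(1 - 1/(2m)) + (m-1)\ln(1 - 1/m) \ge -1$.

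The main obstacle is establishing this logarithmic inequality cleanly: it becomes an equality in the limit $m \to \infty$, so loose upper bounds on the logarithms will not suffice. My plan is a term-by-term comparison of power series. The key identity is that writing $(m-1)/m^k = 1/m^{k-1} - 1/m^k$ telescopes the series $(m-1)\ln(1-1/m) = -\sum_{k \ge 1}(m-1)/(km^k)$ into the remarkably clean form $-1 + \sum_{k \ge 1} 1/(k(k+1)m^k)$, with \emph{all} positive coefficients. Combined with the standard expansion $-\ln(1 - 1/(2m)) = \sum_{k \ge 1} 1/(k \cdot 2^k m^k)$, the desired inequality reduces to showing $\sum_{k \ge 1} [1/(k+1) - 1/2^k]/(km^k) \ge 0$. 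The $k=1$ term vanishes exactly, and for $k \ge 2$ the bracketed factor is non-negative because $2^k \ge k+1$ by a trivial induction. This closes the proof, and the vanishing $k=1$ term together with the positivity of the higher-order terms explains why the bound $1/((k+\ell+1)e)$ is essentially tight for this test point.
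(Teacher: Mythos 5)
Your proof is correct, and it opens exactly as the paper's does: same test point $b = 1/(k+\ell+1)$ and the same quadratic bound $1-e^{-x}\ge x-x^2/2$. The genuine difference is in how the factor $(1-b)^{k+\ell}$ is handled. The paper sets $\phi(t)=(1-t)^{1/t-1}$ and invokes Taylor's theorem, asserting $\phi(0)=1/e$, $\phi'(0)=1/(2e)$ and $\phi''(t)\ge 7/(12e)$ without verification; the resulting inequality $1+t^2/24-7t^3/48\ge 1$ moreover only holds for $t\le 2/7$, which forces a separate numerical check at $t=1/3$ (the case $k+\ell=2$). You instead prove the equivalent reduced statement $(1-\tfrac{1}{2m})(1-\tfrac{1}{m})^{m-1}\ge 1/e$ exactly, via the telescoping identity $(m-1)\ln(1-1/m)=-1+\sum_{j\ge 1}\tfrac{1}{j(j+1)m^j}$ and a term-by-term comparison with $-\ln(1-\tfrac{1}{2m})=\sum_{j\ge 1}\tfrac{1}{j2^jm^j}$: the $j=1$ terms cancel and $2^j\ge j+1$ settles $j\ge 2$. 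I checked the telescoping computation and the reduction; both are right, and the argument is uniform in $m=k+\ell+1\ge 2$ (the only requirement being $1/m<1$ so the series converge), with no case split and no unproven derivative estimates. What the paper's route buys is brevity on the page; what yours buys is a self-contained, fully verifiable argument, plus the observation (from the vanishing $j=1$ term) of why the bound is nearly tight at this test point.
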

\begin{proof}
We evaluate the left-hand side at $b = t = 1/(k+\ell+1)$. 
Let $\phi(t) = (1 - t)^{(1/t) - 1}$.
Then, our goal is to show that $(1 - e^{-t}) \phi(t) \ge t/e$ for $t = 1/3, 1/4, \dots$.
This is easily verified at $t = 1/3$ by direct computation.
For $t \le 1/4$, we use
\begin{align}
    1 - e^{-t} \ge t - \frac{t^2}{2}.
\end{align}
Also, we can show that $\phi(0) = 1/e$, $\phi'(0) = 1/2e$, and $\phi''(t) \ge 7/12e$.
Therefore, by Taylor's theorem, 
\begin{align*}
    \phi(t) \ge \frac{1}{e} + \frac{t}{2e} + \frac{7 t^2}{24e}.
\end{align*}
Thus,
\begin{align*}
    (1 - e^{-t}) \phi(t) \ge \frac{t}{e} \left(1 + \frac{t^2}{24} - \frac{7t^3}{48} \right) \ge \frac{t}{e}. \qedhere
\end{align*}
\end{proof}

Finally, we give the optimal value of $b$ when the objective function is non-monotone. 

\begin{lemma}
\label{lem:kl_matroid}
For any $k,\ell\in\mathbb{N}$, 
\[\frac{1}{2}(k+\ell+2-\sqrt{(k+\ell)(k+\ell+4)})=\argmax_{b\in(0,1]}\{(1-b)^{k+\ell}\cdot be^{-b})\}.\]
\end{lemma}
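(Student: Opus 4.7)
The plan is a direct calculus exercise, exactly parallel to the proof of Lemma~\ref{lem:kl_matroid_nonm}. Set $z = k+\ell$ and $\xi(b) = (1-b)^z \, b \, e^{-b}$. I will find the critical points in $(0,1)$, show there is only one, and confirm that it is a maximum by checking that $\xi$ vanishes at the endpoints $b=0$ and $b=1$.

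First I would differentiate using the product rule:
\[
\xi'(b) = -z(1-b)^{z-1} b e^{-b} + (1-b)^z e^{-b} - (1-b)^z b e^{-b}.
\]
Factoring out $(1-b)^{z-1} e^{-b}$, which is strictly positive on $(0,1)$, reduces the equation $\xi'(b)=0$ to
\[
-zb + (1-b) - b(1-b) = 0,
\]
which simplifies to the quadratic
\[
b^2 - (z+2)\, b + 1 = 0.
\]
The two roots are $b_\pm = \tfrac{1}{2}\bigl(z+2 \pm \sqrt{z(z+4)}\bigr)$, obtained from the quadratic formula using $(z+2)^2 - 4 = z^2 + 4z = z(z+4)$.

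Next I would show that only $b_-$ lies in $(0,1)$. Since $\sqrt{z(z+4)} < \sqrt{(z+2)^2} = z+2$ we get $b_- > 0$, and since $\sqrt{z(z+4)} > \sqrt{z^2} = z$ (for $z\ge 1$) we get $b_- < 1$. On the other hand $b_+ > \tfrac{1}{2}(z+2) \ge \tfrac{3}{2} > 1$, so $b_+$ is outside the feasible region. Thus $b_-$ is the unique critical point of $\xi$ in $(0,1)$.

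Finally, since $\xi(0) = 0 = \xi(1)$ and $\xi(b) > 0$ for $b \in (0,1)$, the unique interior critical point $b_-$ must be the global maximizer on $(0,1]$, giving
\[
\argmax_{b\in(0,1]} (1-b)^{k+\ell} b e^{-b} \;=\; \tfrac{1}{2}\bigl(k+\ell+2 - \sqrt{(k+\ell)(k+\ell+4)}\bigr),
\]
as claimed. There is no real obstacle here; the only place to be careful is keeping track of signs when factoring $\xi'(b)$ and correctly identifying which root of the quadratic falls in the unit interval.
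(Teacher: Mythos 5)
Your proposal is correct and follows essentially the same route as the paper: differentiate $\xi(b)=(1-b)^{z}be^{-b}$, reduce $\xi'(b)=0$ to the quadratic $b^2-(z+2)b+1=0$, and select the root $\tfrac{1}{2}(z+2-\sqrt{z(z+4)})$ as the only critical point in $(0,1]$. The only (minor) difference is that you certify the global maximum via the endpoint values $\xi(0)=\xi(1)=0$ and positivity on $(0,1)$, whereas the paper checks $\xi''(b^*)<0$; both are adequate.
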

\begin{proof}
Let $z=k+\ell$ and set $\xi(b)=(1-b)^zbe^{-b}$. Then $\xi'(b)=0$ iff $e^{-b}(1-b)^{z-1}(1-(2+z)b+b^2)=0$. We require that $b>0$, implying that $b=\frac{1}{2}(z+2-\sqrt{z(z+4)})\equiv b^*$ (it could not have been the other root since $b\leq 1$). It is easy to verify that $0< b\leq 1$. Now, \[\xi''(b) = -\xi'(b) - (z-1)(1-b)^{-1}\xi'(b) + e^{-b}(1-b)^{z-1}(2b-(2+z)),\]
and so $\xi'(b^*)=e^{-b^*}(1-b^*)^{z-1}(2b-(2+z))<0$ since $e^{-b^*},(1-b^*)^{z-1}>0$ and $2b^*-(2+z)<0$ because $2b^*\leq 2<2+z$.  
\end{proof}

\paragraph{Acknowledgements}
We would like to thank Moran Feldman for pointing out an error in an earlier version of the paper.

\bibliographystyle{plain}
\bibliography{ref.bib}

\begin{thebibliography}{10}

\bibitem{adamczyk2015}
Marek Adamczyk.
\newblock Non-negative submodular stochastic probing via stochastic contention
  resolution schemes.
\newblock {\em arXiv preprint arXiv:1508.07771}, 2015.

\bibitem{adamczyk2016}
Marek Adamczyk, Maxim Sviridenko, and Justin Ward.
\newblock Submodular stochastic probing on matroids.
\newblock {\em Mathematics of Operations Research}, 41(3):1022--1038, 2016.

\bibitem{adamczyk2018random}
Marek Adamczyk and Micha{\l} W{\l}odarczyk.
\newblock Random order contention resolution schemes.
\newblock {\em arXiv preprint arXiv:1804.02584}, 2018.

\bibitem{ageev2004pipage}
Alexander~A Ageev and Maxim~I Sviridenko.
\newblock Pipage rounding: A new method of constructing algorithms with proven
  performance guarantee.
\newblock {\em Journal of Combinatorial Optimization}, 8(3):307--328, 2004.

\bibitem{calinescu2007maximizing}
Gruia Calinescu, Chandra Chekuri, Martin P{\'a}l, and Jan Vondr{\'a}k.
\newblock Maximizing a submodular set function subject to a matroid constraint.
\newblock In {\em International Conference on Integer Programming and
  Combinatorial Optimization}, pages 182--196. Springer, 2007.

\bibitem{chekuri2011}
Chandra Chekuri, Jan Vondr{\'a}k, and Rico Zenklusen.
\newblock Submodular function maximization via the multilinear relaxation and
  contention resolution schemes.
\newblock In {\em Proceedings of the forty-third annual ACM symposium on Theory
  of computing}, pages 783--792. ACM, 2011.

\bibitem{chow1964optimal}
YS~Chow, Sigaiti Moriguti, Herbert Robbins, and SM~Samuels.
\newblock Optimal selection based on relative rank (the “secretary
  problem”).
\newblock {\em Israel Journal of mathematics}, 2(2):81--90, 1964.

\bibitem{corless1996lambertw}
Robert~M Corless, Gaston~H Gonnet, David~EG Hare, David~J Jeffrey, and Donald~E
  Knuth.
\newblock On the lambertw function.
\newblock {\em Advances in Computational mathematics}, 5(1):329--359, 1996.

\bibitem{dean2005adaptivity}
Brian~C Dean, Michel~X Goemans, and Jan Vondr{\'a}k.
\newblock Adaptivity and approximation for stochastic packing problems.
\newblock In {\em Proceedings of the sixteenth annual ACM-SIAM symposium on
  Discrete algorithms}, pages 395--404. Society for Industrial and Applied
  Mathematics, 2005.

\bibitem{feige2013pass}
Uriel Feige, Nicole Immorlica, Vahab~S Mirrokni, and Hamid Nazerzadeh.
\newblock Pass approximation: A framework for analyzing and designing
  heuristics.
\newblock {\em Algorithmica}, 66(2):450--478, 2013.

\bibitem{feige2011maximizing}
Uriel Feige, Vahab~S Mirrokni, and Jan Vondrak.
\newblock Maximizing non-monotone submodular functions.
\newblock {\em SIAM Journal on Computing}, 40(4):1133--1153, 2011.

\bibitem{feldman2009online}
Jon Feldman, Aranyak Mehta, Vahab Mirrokni, and S~Muthukrishnan.
\newblock Online stochastic matching: Beating 1-1/e.
\newblock In {\em Foundations of Computer Science, 2009. FOCS'09. 50th Annual
  IEEE Symposium on}, pages 117--126. IEEE, 2009.

\bibitem{feldman2013}
Moran Feldman.
\newblock {\em Maximization Problems with Submodular Objective Functions}.
\newblock PhD thesis, Technion - Israel Institute of Technology, July 2013.

\bibitem{feldman2011}
Moran Feldman, Joseph Naor, and Roy Schwartz.
\newblock A unified continuous greedy algorithm for submodular maximization.
\newblock In {\em Foundations of Computer Science (FOCS), 2011 IEEE 52nd Annual
  Symposium on}, pages 570--579. IEEE, 2011.

\bibitem{feldman2016online}
Moran Feldman, Ola Svensson, and Rico Zenklusen.
\newblock Online contention resolution schemes.
\newblock In {\em Proceedings of the twenty-seventh annual ACM-SIAM symposium
  on Discrete algorithms}, pages 1014--1033. Society for Industrial and Applied
  Mathematics, 2016.

\bibitem{gupta2013}
Anupam Gupta and Viswanath Nagarajan.
\newblock A stochastic probing problem with applications.
\newblock In {\em International Conference on Integer Programming and
  Combinatorial Optimization}, pages 205--216. Springer, 2013.

\bibitem{karp1990optimal}
Richard~M Karp, Umesh~V Vazirani, and Vijay~V Vazirani.
\newblock An optimal algorithm for on-line bipartite matching.
\newblock In {\em Proceedings of the twenty-second annual ACM symposium on
  Theory of computing}, pages 352--358. ACM, 1990.

\bibitem{lee2009non}
Jon Lee, Vahab~S Mirrokni, Viswanath Nagarajan, and Maxim Sviridenko.
\newblock Non-monotone submodular maximization under matroid and knapsack
  constraints.
\newblock In {\em Proceedings of the forty-first annual ACM symposium on Theory
  of computing}, pages 323--332. ACM, 2009.

\bibitem{singlaPHD}
Sahil Singla.
\newblock {\em Combinatorial Optimization Under Uncertainty: Probing and
  Stopping-Time Algorithms}.
\newblock PhD thesis, Carnegie Mellon University, 2018.

\bibitem{singla2018}
Sahil Singla.
\newblock The price of information in combinatorial optimization.
\newblock In {\em Proceedings of the Twenty-Ninth Annual ACM-SIAM Symposium on
  Discrete Algorithms}, SODA '18, pages 2523--2532, Philadelphia, PA, USA,
  2018. Society for Industrial and Applied Mathematics.

\bibitem{sviridenko2017}
Maxim Sviridenko, Jan Vondr{\'a}k, and Justin Ward.
\newblock Optimal approximation for submodular and supermodular optimization
  with bounded curvature.
\newblock {\em Mathematics of Operations Research}, 42(4):1197--1218, 2017.

\bibitem{weitzman1979optimal}
Martin~L Weitzman.
\newblock Optimal search for the best alternative.
\newblock {\em Econometrica: Journal of the Econometric Society}, pages
  641--654, 1979.

\end{thebibliography}

\appendix

\section{Proof of Theorem \ref{thm:submod+linear}}

Let $g$ be a normalized submodular function and $c$ a non-decreasing modular function. Let $G$ and $C$ be their respective multilinear extensions, and let $T\geq 0$ be given. 
Throughout the proof we will assume that we have an oracle to evaluate $G$. Without such an oracle, we may evaluate $G$ with high probability with polynomially many queries. This is standard practice (e.g., ~\cite{chekuri2011}).

Recall that $R=\max\{\max_{e\in\N} g_\emptyset(e), \max_{e\in\N}|\cost(e)|\}$ and note that $g(OPT)$ and $|\cost(OPT)|$ are both upper bounded by $nR$.

Let $\y^*\in\argmax_{\y} \{\alpha(T)g^+(\y)-T C(\y)\}$. Recall that our goal is to find a point $\x\in T\cdot\P$ such that $G(\x)-C(\x)\geq \alpha(T) g^+(\y^*)-T C(\y^*)$. Our first goal is to estimate the value of $C(\y^*)$. As in \cite{sviridenko2017}, we do this by sampling $O(\eps^{-1} n\log n)$ points from the interval $[-nR,nR]$. For each point $\theta$, we will essentially run Measured Continuous Greedy over the polytope $\P\cap \{\x:\wC(\x)\leq \theta \}$. The algorithm is described formally below.\\

\textbf{Modified Measured Continuous Greedy.}
Assume that $1/\eps\in\mathbb{N}$; otherwise decrease $\eps$ sufficiently. Similarly to \cite{sviridenko2017}, fill $[0,TR]$ with $O(\eps^{-1})$ points of the form $i\eps TR$ for $i\in\{0,1,\dots,\eps^{-1}\}$, and $[TR,TnR]$ with $O(\eps^{-1}\log n)$ points of the form $(1+\eps/n)^i\log (Tn)$ for $i\in\{0,1,\dots,\lceil\log_{1+\eps/n}n\rceil\}$. For each point $\theta$, perform the following. 

Set $\x^0_\theta=\ind_{\emptyset}$ and assume $\delta$ is sufficiently small. Let $\P(\theta)=\P\cap \{\x:\wC(\x)\leq \theta\}$. For $i=k\delta$, $k=0,1,\dots,T/\delta-1$ let $\v^i=\argmax_{\v}\{\v\cdot(\grad G(\x_\theta^i)\circ(1-\x_\theta^i)):\v\in\P(\theta)\}$. Define $\x^{i+\delta}_\theta$ by $x^{i+\delta}_e=x^i_{\theta,e}+\delta v_e^i (1-x_{\theta,e}^i)$. As the final solution we return $\argmax_\theta G(\x_\theta^T)-C(\x_\theta^T)$. 

We emphasize the similarity of this procedure to Measured Continuous Greedy \cite{feldman2011}. The only differences are the estimation of $C(\y^*)$ and restriction of the given polytope $\P$ to $\P(\theta)$. \\

\textbf{Analysis.} 
For some point $\theta$ we have $\theta\leq C(\y^*)\leq \theta+\eps R$; see \cite{sviridenko2017} for more details. We will perform the rest of the analysis for this value of $\theta$. For notational simplicity let $\x=\x^T_\theta$. We want to demonstrate that
\[G(\x)-C(\x)\geq \alpha(T) g^+(\y^*)-T C(\y^*)-O(\eps)R\] with high probability, and that $\x/T\in \P$. The argument of the latter fact does not change from the analysis of Measured Continuous Greedy in \cite{feldman2011}, thus we proceed to prove the former.

We begin by upper bounding $C$. Noting that $\x=\sum_{i=0}^{T/\delta-1}\delta \v^{i\delta}\circ(1-\x^{i\delta})$ and that $\wC$ is linear and non-decreasing, we have 
\begin{equation*}
    \wC(\x)=\delta\sum_{i=0}^{T/\delta-1}\wC(\vb{v}^{i\delta}\circ(1-\x^{i\delta}))\leq \delta\sum_{i=0}^{T/\delta-1}\wC(\vb{v}^{i\delta})\leq T\theta,
\end{equation*}
since $\wC(\vb{v}^{i\delta})\leq \theta$ for each $i$ by construction. Therefore, 
$C(\x)= T\theta\leq  T C(\y^*)$.

Now we proceed to lower bounding $G$. Here we follow the analysis of Measured Continuous Greedy as found in \cite{feldman2013} and the modification of its proof as in \cite{adamczyk2015} and ~\cite{adamczyk2018random}. We begin by assuming that $g$ is non-monotone. We need the following lemmas from Feldman et al. 

\begin{lemma}
[\cite{feldman2011}]
If $|x_e'-x_e|\leq \delta$ for all $\e\in\U$, then 
\begin{equation}
    \label{eq:feldman2011_lemma}
    G(\x')-G(\x)\geq \sum_{e\in\U}(x_e'-x_e)\partial_e G(\x)-O(n^3\delta^2)\max_e f(e). 
\end{equation}
\end{lemma}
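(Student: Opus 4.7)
The plan is to exploit the fact that $G$, being the multilinear extension of $f$, is a polynomial that is linear in each coordinate separately. Consequently, its Taylor expansion around any point is exact and finite, and only the constant and linear terms are ``useful'': everything else must be absorbed into the $O(n^3 \delta^2)\max_e f(e)$ error term. The inequality to prove therefore reduces to a quantitative bound on the second- and higher-order remainder of the Taylor expansion of $G$ around $\x$.

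To make this concrete I would proceed by a coordinate-wise telescoping. Fix any ordering $e_1,\ldots,e_n$ of $\U$ and let $\x^{(k)}$ denote the vector that agrees with $\x'$ on the first $k$ coordinates and with $\x$ on the remaining ones, so that $\x^{(0)}=\x$ and $\x^{(n)}=\x'$. Because $G$ is multilinear,
\[G(\x^{(k)})-G(\x^{(k-1)}) = (x_{e_k}'-x_{e_k})\,\partial_{e_k} G(\x^{(k-1)}),\]
and telescoping gives $G(\x')-G(\x)=\sum_{k}(x_{e_k}'-x_{e_k})\,\partial_{e_k} G(\x^{(k-1)})$. The gap between this and the linearization around $\x$ is
\[\sum_{k=1}^{n}(x_{e_k}'-x_{e_k})\bigl[\partial_{e_k} G(\x^{(k-1)})-\partial_{e_k} G(\x)\bigr],\]
so the task reduces to bounding each bracketed difference in absolute value by $O(n \delta)\max_e f(e)$.

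The main obstacle, and the place where submodularity enters, is controlling $|\partial_{e_k} G(\x^{(k-1)})-\partial_{e_k} G(\x)|$. I would handle this by iterating the telescoping one level down: since $\x^{(k-1)}$ and $\x$ differ on at most $k-1$ coordinates, each by at most $\delta$, a coordinate-by-coordinate walk between them yields
\[|\partial_{e_k} G(\x^{(k-1)})-\partial_{e_k} G(\x)| \;\leq\; \delta \sum_{j<k} \max_{\vb{\xi}} |\partial_{e_j}\partial_{e_k} G(\vb{\xi})|.\]
Each mixed partial $\partial_{e_j}\partial_{e_k} G(\vb{\xi})$ equals $\E_{R\sim \vb{\xi}_{-\{j,k\}}}\bigl[f(R+e_j+e_k)-f(R+e_j)-f(R+e_k)+f(R)\bigr]$, a signed combination of four $f$-values whose absolute value can be bounded by $O(\max_e f(e))$ via submodularity together with normalization/non-negativity of $f$.

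Putting the two layers of telescoping together bounds the total error by $n\cdot\delta\cdot n\cdot\delta\cdot O(\max_e f(e)) = O(n^2\delta^2)\max_e f(e)$, which is in fact stronger than the claimed $O(n^3\delta^2)\max_e f(e)$; the extra factor of $n$ in the stated bound presumably reflects looser but simpler accounting that dispenses with the second telescoping and just bounds $|\partial_e G|$ and $|\partial_e \partial_{e'} G|$ uniformly. The hardest step in executing the plan is the uniform bound on the mixed partial $|\partial_{e_j}\partial_{e_k} G(\vb{\xi})|$ for arbitrary (possibly non-monotone) submodular $f$: one must interpret the discrete second difference carefully to get a two-sided bound, which is cleanest by appealing to the submodular inequality on one side and to $f\geq 0$ combined with $f(S+e)-f(S)\leq f(\{e\})$ on the other.
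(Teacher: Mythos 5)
This lemma is not proved in the paper at all---it is imported verbatim from Feldman et al.~\cite{feldman2011}---so the only question is whether your reconstruction is sound. Your skeleton (exact coordinate-wise telescoping using multilinearity, then a second telescoping to reduce the error to uniform bounds on mixed partials $\partial_{e_j}\partial_{e_k}G$) is fine, but the key quantitative claim, that $|\partial_{e_j}\partial_{e_k}G(\xi)|=O(\max_e f(\{e\}))$ for an arbitrary non-negative (possibly non-monotone) submodular $f$, is false---and this is exactly the step you flagged as hardest. Counterexample: on a ground set $R\cup\{j,k\}$ with $|R|=m$, let $f(S)=|S\cap R|$ if $\{j,k\}\not\subseteq S$ and $f(S)=0$ otherwise. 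This $f$ is normalized, non-negative and submodular (the marginal of any $r\in R$ at $S$ is $\mathbf{1}[\{j,k\}\not\subseteq S]$ and the marginal of $j$ at $S\not\ni j$ is $-|S\cap R|\cdot\mathbf{1}[k\in S]$, both non-increasing in $S$), and every singleton value is at most $1$; yet $f(R+j+k)-f(R+j)-f(R+k)+f(R)=0-m-m+m=-m$, so the mixed partial of $G$ at $\mathbf{1}_R$ is $-m=-\Omega(n)\max_e f(\{e\})$. The justification you sketch (submodularity on one side, $f\ge 0$ together with $f(S+e)-f(S)\le f(\{e\})$ on the other) actually yields only the one-sided bounds $-(n+1)\max_e f(\{e\})\le \partial_{e_j}\partial_{e_k}G(\xi)\le 0$, via $f(S)\le\sum_{e\in S}f(\{e\})$; the $O(\max_e f(\{e\}))$ bound does hold for monotone $f$, but Theorem~\ref{thm:submod+linear} needs the non-monotone case.

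The repair is immediate and instructive: with the correct uniform bound $O(n\max_e f(\{e\}))$ on the mixed partials, your two-level telescoping gives total error at most $n\cdot\delta\cdot n\delta\cdot O(n\max_e f(\{e\}))=O(n^3\delta^2)\max_e f(\{e\})$, which is precisely the stated bound. So the factor $n^3$ is not ``looser but simpler accounting''---it is what your own argument produces once the second-derivative estimate is corrected---and the claimed strengthening to $O(n^2\delta^2)\max_e f(\{e\})$ should be withdrawn. (Also note that since $x'_e-x_e$ may have either sign under the hypothesis $|x'_e-x_e|\le\delta$, you genuinely need the two-sided bound on the mixed partials, which the corrected estimate provides.) With that one change, your proof is a valid derivation of the cited lemma.
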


\begin{lemma}[\cite{feldman2011}]
For all $i$, and $e\in\U$, $x_e^i\leq 1-(1-\delta)^i/\delta\leq 1-e^{-i}+O(\delta)$. Moreover, if $x_e\leq a$ for every $ e\in \U$ then for all $E\subset\U$, $G(\x\vee\ind_E)\ge (1-a)g(E)$. 
\end{lemma}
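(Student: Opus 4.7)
The lemma consists of two independent claims, which I would handle separately.

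The first claim is the coordinate bound, which I would prove by induction on the iteration index $k = i/\delta$. The base case is immediate since $x_e^0 = 0$. For the step, the Modified Measured Continuous Greedy update $x_e^{i+\delta} = x_e^i + \delta v_e^i(1 - x_e^i)$ combined with $\v^i \in \P(\theta) \subseteq [0,1]^\U$ (so $v_e^i \in [0,1]$) yields
\[
1 - x_e^{i+\delta} = (1 - x_e^i)(1 - \delta v_e^i) \geq (1-\delta)(1 - x_e^i) \geq (1-\delta)^{(i+\delta)/\delta}
\]
by the inductive hypothesis, which is the claimed bound (interpreting the paper's displayed expression as $1-(1-\delta)^{i/\delta}$, the usual iteration-count bound for Measured Continuous Greedy). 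The asymptotic estimate $1-(1-\delta)^{i/\delta} \leq 1 - e^{-i} + O(\delta)$ then follows from $\ln(1-\delta) \geq -\delta(1 + O(\delta))$: exponentiating gives $(1-\delta)^{i/\delta} \geq e^{-i}(1 - O(\delta))$ uniformly for $i \leq T$.

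For the sampling inequality, I would first unfold the multilinear extension:
\[
G(\x \vee \ind_E) = \E[g(R \cup E)],
\]
where $R \subseteq \U \setminus E$ is random with $\Pr[e \in R] = x_e \leq a$ independently (the coordinates on $E$ contribute trivially because $(\x \vee \ind_E)_e = 1$ for $e \in E$). Setting $h(T) := g(T \cup E)$ produces a nonnegative submodular function on $2^{\U \setminus E}$ with $h(\emptyset) = g(E)$, so the claim reduces to $\E[h(R)] \geq (1-a)\,h(\emptyset)$. This is a standard sampling inequality from the non-monotone submodular maximization literature (Feldman et al.). I would prove it in two steps: first, for the uniform case $\Pr[e \in R] = a$, reduce to the classical inequality $\E[h(S(p))] \geq (1-p)h(\emptyset) + p\,h(S)$ (which itself admits a clean induction on $|S|$ using the pair inequality $h(A) + h(B) \geq h(A \cup B) + h(A \cap B)$) applied with $S = \U \setminus E$ and $p = a$, yielding $\E[h(R^a)] \geq (1-a)h(\emptyset) + a\,h(\U\setminus E) \geq (1-a)h(\emptyset)$ since $h \geq 0$; second, extend to non-uniform $x_e \leq a$ by coupling $R \subseteq R^a$ via element-wise thinning (keep each $e \in R^a$ with probability $x_e/a$), and use submodularity to ensure the thinned bound still dominates $(1-a)h(\emptyset)$.

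The coordinate bound is routine; the genuine subtlety is in the sampling inequality when $g$ is non-monotone. A naive element-by-element induction on $|\U \setminus E|$ breaks down because the step splits according to whether the peeled-off element $e^*$ lies in $R$, and then requires $g(E \cup \{e^*\}) \geq g(E)$, which can fail. The fix, as in Feldman et al., is to combine the two branches of the step \emph{jointly} via the pairwise submodularity inequality, which tightens the bound enough to close the induction—this is the main technical content to carry over from their analysis.
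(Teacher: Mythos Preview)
The paper does not give its own proof of this lemma: it is quoted verbatim from \cite{feldman2011} in the appendix and used immediately, with no argument supplied. So there is nothing in the paper to compare your proposal against; your write-up already goes further than the paper does.

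On the merits of your outline: the coordinate bound is exactly the standard Measured Continuous Greedy induction you describe (and you are right that the exponent should be read as $i/\delta$). For the sampling inequality, your treatment of the uniform case via the Feige--Mirrokni--Vondr\'ak bound $\E[h(S(p))]\geq (1-p)h(\emptyset)+p\,h(S)$ is correct, and your closing paragraph correctly pinpoints that the pairwise submodularity inequality is what rescues the induction in the non-monotone setting.

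The one genuine gap is your reduction from non-uniform $x_e\le a$ to the uniform case by coupling $R\subseteq R^a$. Knowing $\E[h(R^a)]\geq (1-a)h(\emptyset)$ together with $R\subseteq R^a$ does not control $\E[h(R)]$ when $h$ is non-monotone, and ``use submodularity'' is not enough to close this: submodularity compares marginals, not values of $h$ on nested sets. The clean way to do this reduction is to drop the coupling and use multilinearity instead: $G(\x\vee\ind_E)$ is affine in each coordinate $x_e$ for $e\notin E$, so its minimum over the box $\prod_{e\notin E}[0,a]$ is attained at a vertex where every $x_e\in\{0,a\}$; at such a vertex you are sampling a subset $T\subseteq \U\setminus E$ uniformly at rate $a$, and the FMV bound applies directly to give $(1-a)g(E)$. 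With that replacement your argument is complete.
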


Let $H_e(\x^i)=G(\x_i\vee\ind_e)-G(\x_i)$ and define $H_E(\x_i)$ similarly. Applying \eqref{eq:feldman2011_lemma} to $\x^{i+\delta}-\x^i$ and noting that $\partial_e G(\x)=\frac{G(\x\vee\ind_e)-G(\x)}{1-x_e}$ yields 
\begin{align}
\label{eq:mmcg1}
    G(\x^{i+\delta})-G(\x^i)\geq \delta \sum_{e\in\N} v_e^iH_e(\x^i)-O(n^3\delta^2)\max_e g(e) \geq \delta\sum_{e\in\U}y_e^*H_e(\x^i)-O(n^3\delta^2)\max_e g(e), 
\end{align}
where the second inequality uses the definition of $\v^i$. Let $(p_E)_{E\subset \U}$ be the maximizing argument of $g^+(\y^*)$, i.e., $g^+(\y^*)=\sum_{E\subset\U}p_E g(E)$.  Then,
\begin{align}
\label{eq:mmcg2}
    \sum_{e\in\N}y_e^* H_e(\x_i)&=\sum_{E\subset\U}p_E\sum_{e\in E}H_e(\x_i)\geq \sum_{E\subset\N}p_E H_E(\x_i) \notag \\ 
    &\geq \bigg(\sum_{E\subset\N} p_E g(E)(e^{-i}-O(\delta))\bigg)-G(\x^i) =(e^{-i}-O(\delta))g^+(\y^*) - G(\x^i), 
\end{align}
where the first inequality is due to monotonicity.  
Now, notice that $\max_e g(e)\leq \max_{E\subset\N}g(E)= \max_{E\subset\N}g^+(\ind_E)\leq g^+(\vb{z}^*)$ where $\z^*\in\argmax_{\vb{z}}\{g^+(\vb{z}):\vb{z}\in\P\}$ (because $\ind_E\in\P$). By definition of $\y^*$ we have $\alpha(T)g^+(\y^*)-T\beta C(\y^*)\geq \alpha(T)g^+(\vb{z}^*)-T\beta C(\vb{z}^*)$. Moreover, $C(\vb{z}^*)=\sum_{e\in\N}\price_ez_e^*\leq \price$ where $\price=\sum_{e\in\N}\price_e$. Hence,
\begin{equation*}
    \max_e g(e)\leq g^+(\vb{z}^*)\leq g^+(\y^*)  -T\beta g^+(\y^*) + T\beta \price \leq g^+(\y^*) + T\beta \price 
\end{equation*}
Setting $\kappa=T\beta\price$, and combining the above with \eqref{eq:mmcg1} and \eqref{eq:mmcg2} gives 
\begin{equation}
\label{eq:mmcg3}
G(\x^{i+\delta})-G(\x^i)\geq \delta(e^{-i}g^+(\y^*)-G(\x^i))-O(n^3\delta ^2)g^+(\y^*)-O(n^3\delta^2)\kappa.
\end{equation}

As noted in \cite{adamczyk2015}, the analysis of \cite{feldman2013} relies on demonstrating that for all $i$
\begin{equation}
    \label{eq:measured_greedy_bound}
    G(\x^{i+\delta})-G(\x^i)\geq \delta[(e^{-i}\cdot g(OPT)-G(\x^i)]-O(n^3\delta^2)g(OPT).
\end{equation}
The analysis after this point does not use any properties of $g(OPT)$, and can be replaced by any constant. In our case, we will replace it with $g^+(\y^*)$. Additionally (and crucially in our case) the analysis uses no properties of the vector $v_e(t)$, meaning that it is not affected by working over the polytope $\P(\theta)$. Therefore, continuing with their analysis, but tacking on the term $O(n^3\delta^2)\kappa$ onto the end, eventually yields that
\[G(\x^T)\geq [Te^{-T}-O(n^3\delta^2)]g^+(\y^*)-O(n^3\delta^2)\kappa=(Te^{-T}-o(1))g^+(\y^*)-o(1),\]
if $\delta$ is chosen sufficiently small. Combining this inequality with \eqref{eq:measured_greedy_bound} gives the desired result, and completes the proof if $g$ is non-monotone. If $g$ is monotone, then we may strengthen the bound with precisely the same techniques used in \cite{feldman2013}.

\end{document}